\pgfplotsset{width=9.1cm,height=8.2cm,compat=newest}
\tikzset{dashdot/.style={dash pattern=on 1pt off 2.5pt on 4.5pt off 2.5pt}}
\pgfplotsset{
  tick label style = {font=\footnotesize},
  every axis label = {font=\footnotesize},
  legend style = {font=\footnotesize},
  label style = {font=\normalsize}
}
\newtheorem{thm}{Theorem}
\newtheorem{lem}[thm]{Lemma}
\newtheorem{defn}[thm]{Definition}
\newtheorem{ex}[thm]{Example}
\newtheorem{cor}[thm]{Corollary}
\renewenvironment{proof}{{\noindent\bf Proof.}}{ \hfill ~\qed}
\def\qed{\rule[0pt]{5pt}{5pt}\par\medskip}
\newcommand{\bmat}[1]{\begin{bmatrix}#1\end{bmatrix}}		
\newcommand{\bmtx}{\begin{bmatrix}}
\newcommand{\emtx}{\end{bmatrix}}
\newcommand{\bsmtx}{\left[ \begin{smallmatrix}} 
\newcommand{\esmtx}{\end{smallmatrix} \right]}
\newcommand{\bmatarray}[1]{\left[\begin{array}{#1}}
\newcommand{\ematarray}{\end{array}\right]} 
\newcommand{\field}[1]{\mathbb{#1}}
\newcommand{\R}{\field{R}}
\newcommand{\Sm}{\field{S}}
\newcommand{\Ltwo}{\mathcal{L}_2[0,T]}
\newcommand{\RH}{\field{RH}_\infty}
\newcommand{\RL}{\field{RL}_\infty}
\begin{document}

\title{Finite Horizon Robustness Analysis of LTV Systems
 \\ Using Integral Quadratic Constraints}

	\author{Pete Seiler\footnotemark[1] 
          \and Robert M. Moore\footnotemark[2] 
	\and Chris Meissen\footnotemark[2]
	\and Murat Arcak \footnotemark[3]
	\and Andrew Packard\footnotemark[2]}

\note{}
\maketitle

\footnotetext[1]{P. Seiler is with the Department of AEM at the
  University of Minnesota, MN, USA. \texttt{seile017@umn.edu}}
\footnotetext[2]{R. M.~Moore, C.~Meissen, and
  A.~Packard are with the Department of Mechanical Engineering at the
  University of California, Berkeley, CA~94720, USA.\\
  \texttt{\{max.moore,cmeissen,apackard\}@berkeley.edu}}
\footnotetext[3]{M. Arcak is with the Department of Electrical
  Engineering and Computer Science at the University of California,
  Berkeley, CA~94720, USA. \texttt{arcak@berkeley.edu}} 

\begin{abstract}

  The goal of this paper is to assess the robustness of an uncertain
  linear time-varying (LTV) system on a finite time horizon.  The
  uncertain system is modeled as a connection of a known LTV system
  and a perturbation. The input/output behavior of the perturbation is
  described by time-domain, integral quadratic constraints (IQCs).
  Typical notions of robustness, e.g. nominal stability and gain/phase
  margins, can be insufficient for finite-horizon analysis.  Instead,
  this paper focuses on robust induced gains and bounds on the
  reachable set of states.  Sufficient conditions to compute robust
  performance bounds are formulated using dissipation inequalities and
  IQCs.  The analysis conditions are provided in two equivalent forms
  as Riccati differential equations and differential linear matrix
  inequalities.  A computational approach is provided that leverages
  both forms of the analysis conditions. The approach is demonstrated
  with two examples.

\end{abstract}

\section{Introduction}

This paper develops theoretical and computational methods to analyze
the robustness of linear time-varying (LTV) systems over finite time
horizons.  Motivating applications for this work include robotic
systems \cite{murray94,spong05} and space launch / re-entry vehicles
\cite{marcos09,veenman09} both of which undergo finite-time
trajectories.  Typical notions of robustness, e.g. nominal stability
and gain/phase margins, can be insufficient for such systems.  For
example, one approach is to evaluate the stability and performance of
the LTV system at ``frozen'' time instances along the
trajectory. However, there are LTV systems $\dot{x}(t)=A(t) x(t)$ for
which $A(t)$ is stable for each frozen time $t$ but with trajectories
that grow unbounded \cite{khalil01}.

This paper moves beyond the frozen analysis technique and instead
evaluates time-domain metrics over the finite horizon.  The analysis
is performed on an uncertain LTV system modeled, as shown in
Figure~\ref{fig:Gunc}, by an interconnection of a known, nominal LTV
system $G$ and a perturbation $\Delta$.  The perturbation is used to
model difficult to analyze elements including nonlinearities and
dynamic or parametric uncertainty.  The input-output properties of
$\Delta$ are characterized by integral quadratic constraints (IQCs)
\cite{megretski97}.  An extensive list of IQCs for various classes of
perturbations is given in \cite{megretski97,veenman16}.  The main
result in \cite{megretski97} is an (infinite-horizon), input-output
$\mathcal{L}_2$ stability theorem using frequency domain IQCs. The
proof relies on an operator theoretic, homotopy method.

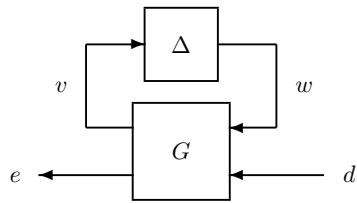
\begin{figure}[h]
\centering
\scalebox{0.9}{
\begin{picture}(172,90)(23,20)
 \thicklines
 \put(75,25){\framebox(40,40){$G$}}
 \put(163,32){$d$}
 \put(155,35){\vector(-1,0){40}}  
 \put(23,32){$e$}
 \put(75,35){\vector(-1,0){40}}  
 \put(80,75){\framebox(30,30){$\Delta$}}
 \put(42,70){$v$}
 \put(55,55){\line(1,0){20}}  
 \put(55,55){\line(0,1){35}}  
 \put(55,90){\vector(1,0){25}}  
 \put(143,70){$w$}
 \put(135,90){\line(-1,0){25}}  
 \put(135,55){\line(0,1){35}}  
 \put(135,55){\vector(-1,0){20}}  
\end{picture}
} 
\caption{Interconnection $F_u(G,\Delta)$ of a nominal LTV system $G$
  and perturbation $\Delta$.}
\label{fig:Gunc}
\end{figure}

The main contribution of this paper is an algorithm to compute
robustness metrics on a finite horizon. First, nominal finite-horizon
LTV performance is reviewed (Section~\ref{sec:nomperf}) focusing on
induced $\mathcal{L}_2$ and $\mathcal{L}_2$-to-Euclidean gains. The
$\mathcal{L}_2$-to-Euclidean gain is useful for computing bounds on
the reachable set of states.  Next, sufficient conditions are given
(Section~\ref{sec:robperf}) to bound the induced $\mathcal{L}_2$ and
$\mathcal{L}_2$-to-Euclidean gains for uncertain LTV systems. The
analysis is formulated with dissipation inequalities and IQCs.  This
yields performance conditions in the form of infinite-dimensional
differential linear matrix inequalities (DLMIs).  These conditions can
be equivalently re-written with a Riccati Differential Equation
(RDE). This equivalence is based on a variation of the strict bounded
real lemma (Theorem~\ref{thm:BRL} in Section~\ref{sec:nomperf}) which
generalizes existing results in
\cite{tadmor90,ravi91,green95,chen00}. The algorithm to compute finite
horizon robustness metrics (Section~\ref{sec:comp}) leverages both
forms of these conditions. The proposed approach is demonstrated with
two examples (Section~\ref{sec:ex}).

This paper adds to existing results to assess robustness of
time-varying systems. The most closely related work is
\cite{jonsson02} which also considers finite horizon robustness with
IQCs.  However, the work in \cite{jonsson02} does not consider
disturbances and the theoretical approach / resulting numerical
algorithm are different than given here.  The work in
\cite{pfifer16IJRNC} and \cite{fry17} is also relevant. Uncertain
linear parameter varying systems are considered in
\cite{pfifer16IJRNC}.  The analysis is formulated using dissipation
inequalities and IQCs. A similar approach is used in \cite{fry17} to
provide robustness conditions for discrete-time LTV systems.  The
theoretical and computational approaches provided here differ from
these previous works in order to handle continuous-time LTV systems on
finite horizons.  Other related work includes robustness analysis for
finite horizon batch processes \cite{ma01}, nonlinear systems
\cite{tierno97}, and periodic LTV systems via time-domain lifting
\cite{dullerud96,ma02,kim06} or harmonic balance / frequency domain
lifting \cite{wereley90,fardad08,shafi13}.

\textbf{Notation:} Let $\mathbb{R}^{n \times m}$ and $\mathbb{S}^{n}$
denote the sets of $n$-by-$m$ real matrices and $n$-by-$n$ real,
symmetric matrices. The finite-horizon $\Ltwo$ norm of a signal
$v:[0,T] \rightarrow \R^n$ is defined as
$\|v\|_{2, [0,T]} := \left( \int_0^T v(t)^T v(t) dt \right)^{1/2}$.
If $\|v\|_{2, [0,T]}$ is finite then $v \in \Ltwo$.  $\RL$ denotes the
set of rational functions with real coefficients that have no poles on
the imaginary axis.  $\RH$ is the subset of functions in $\RL$ that
are analytic in the closed right-half of the complex plane.  Finally,
let $E \in \Sm^n$ and $\alpha \in \R$ be given with $E\ge 0$ and
$\alpha>0$.  Define an ellipsoid in terms of $(E,\alpha)$ as
$\mathcal{E}(E,\alpha) := \{x \in \mathbb{R}^n \ | \ x^T E x \leq
\alpha^2 \}$

\section{Nominal Performance}
\label{sec:nomperf}

\subsection{Finite Horizon LTV Systems}

Consider an LTV system $G$ defined on $[0,T]$:
\begin{align}
  \label{eq:LTV1}
  \dot{x}(t) & = A(t) x(t) + B(t) d(t) \\
  \label{eq:LTV2}
  e(t) & = C(t) x(t) + D(t) d(t) 
\end{align}
$x\in \R^{n_x}$ is the state, $d \in \R^{n_d}$ is the input, and
$e \in \R^{n_e}$ is the output.  The state matrices
$A:[0,T] \rightarrow \R^{n_x \times n_x}$,
$B:[0,T] \rightarrow \R^{n_x \times n_d}$,
$C:[0,T] \rightarrow \R^{n_e \times n_x}$, and
$D:[0,T] \rightarrow \R^{n_e \times n_d}$ are piecewise-continuous
(bounded) functions of time.  It is assumed throughout that
$T<\infty$. Thus $d\in\Ltwo$ implies $x$ and $e$ are in
$\Ltwo$ for any $x(0)$ \cite{brockett15}.

Many different performance metrics can be defined for this (nominal)
finite-horizon LTV system.  This paper mainly focuses on two
specific metrics.  First, the {\it finite-horizon induced
  $\mathcal{L}_2$-gain} of $G$ is
\begin{align*}
\|G\|_{2,[0,T]} :=  \sup \bigg\{\frac{\|e\|_{2,[0,T]}} {\|d\|_{2,[0,T]}} \ 
     \bigg|  \  x(0)=0, 0 \not = d \in \mathcal{L}_{2, [0,T]} \bigg\}.
\end{align*}
As noted above, $d\in \Ltwo$ implies $e \in \Ltwo$.  Thus the
$\mathcal{L}_2$ gain is finite for any fixed horizon $T<\infty$.

Next, assume $D(T)=0$.  Then the {\it finite-horizon
  $\mathcal{L}_2$-to-Euclidean gain} of $G$ is
\begin{align*}
\|G\|_{E,[0,T]}:= \sup \bigg\{\frac{\|e(T)\|_2} {\|d\|_{2,[0,T]}} \ 
  \bigg|  \ x(0)=0, 0 \not = d \in \mathcal{L}_{2, [0,T]}  \bigg\}.
\end{align*}
The $\mathcal{L}_2$-to-Euclidean gain depends on the system output $e$
only at the final time $T$.  The assumption that $D(T)=0$ ensures this
gain is well-defined. 

The $\mathcal{L}_2$-to-Euclidean gain can be used to bound the set of
states $x(T)$ reachable by disturbances of a given norm.  This
\emph{reachable set} is formally defined as follows:
\begin{align*}
\mathcal{R}_\beta := \left\{ x(T) \ \big| \ x(0)=0, 
   \| d\|_{2,[0,T]}  \le \beta \right\}.
\end{align*}
If $C(T)=I_{n_x}$ and $D(T)=0$ then $e(T)=x(T)$.  In this special
case, if $\|G\|_{E,[0,T]}\le \gamma$ then
$\|x(T)\|_2 \le \gamma \|d\|_{2,[0,T]}$.  This implies the reachable
set $\mathcal{R}_\beta$ is contained in a sphere of radius
$\gamma \beta$.  More general ellipsoidal bounds on the reachable set
can be obtained by proper selection of the output matrices.  For
example, select $C:= E^{\frac{1}{2}}$ and $D:=0$ for some given
$E \in \Sm^{n_x}$ with $E \ge 0$. With these choices
$\|G\|_{E,[0,T]}\le \gamma$ implies an ellipsoidal bound on the
reachable set:
$\mathcal{R}_\beta \subseteq \mathcal{E}(E,\beta \gamma)$.  The size
of the reachable set scales with the norm of the disturbance input.
The state $x(t)$ at intermediate times $t\in [0,T]$ can similarly be
bounded using the $\mathcal{L}_2$-to-Euclidean gain $\|G\|_{E,[0,t]}$.



\subsection{Generic Quadratic Cost}
\label{sec:Jcost}

The two nominal performance metrics introduced above are generalized
in Section~\ref{sec:robperf} to assess robustness of uncertain
systems.  A generic quadratic cost function is defined next in order
to unify these various nominal and robust performance metrics.
Specifically, let $Q : [0,T] \rightarrow \Sm^{n_x}$,
$R : [0,T] \rightarrow \Sm^{n_d}$,
$S : [0,T] \rightarrow \R^{n_x \times n_d}$, and
$F \in \R^{n_x\times n_x}$ be given. $(Q,S,R)$ are assumed to be
piecewise continuous (bounded) functions. A quadratic cost function
$J: \Ltwo \rightarrow \R$ is defined by $(Q,S,R,F)$ as follows:
\begin{align}
  \nonumber
  J(d) & := x(T)^T F x(T)  + \int_0^T \bsmtx x(t) \\ d(t) \esmtx^T
  \bsmtx Q(t) & S(t) \\ S(t)^T & R(t) \esmtx 
  \bsmtx x(t) \\ d(t) \esmtx  dt \\
  \label{eq:J}
  & \mbox{subject to: Eq.~\ref{eq:LTV1} with } x(0)=0
\end{align}

The finite-horizon induced $\mathcal{L}_2$ gain of $G$ can be related
to the quadratic cost $J$ by proper choice of $(Q,S,R,F)$.  In
particular, let $\gamma>0$ be given and select $Q(t):= C(t)^TC(t)$,
$S(t):= C(t)^TD(t)$, $R(t):= D(t)^TD(t) - \gamma^2 I_{n_d}$, and
$F:=0$.  This yields the following cost function
\begin{align}
  J(d) = \| e\|_{2,[0,T]}^2 - \gamma^2 \| d\|_{2,[0,T]}^2
\end{align}
Thus $J(d) \le 0$ $\forall d \in \Ltwo$ if and only if
$\|G\|_{2,[0,T]} \le \gamma$.

The finite-horizon $\mathcal{L}_2$-to-Euclidean gain of $G$ can also
be related to the quadratic cost $J$ but with different choices for
$(Q,S,R,F)$.  Let $\gamma>0$ be given and select $Q(t):= 0$,
$S(t):= 0$, $R(t):= -\gamma^2 I_{n_d}$, and $F:=C^T(T)C(T)$.  This
yields the following cost function
\begin{align}
  J(d) = \| e(T) \|_2^2 - \gamma^2 \| d\|_{2,[0,T]}^2
\end{align}
Thus $J(d) \le 0$ $\forall d \in \Ltwo$ if and only if
$\|G\|_{E,[0,T]} \le \gamma$.

\subsection{Strict Bounded Real Lemma}


The next theorem states an equivalence between a bound on the
quadratic cost $J$ and the existence of a solution to a Riccati
Differential Equation (RDE) or Riccati Differential Inequality (RDI).
The theorem is expressed in terms of strict inequalities and
generalizes existing results for the induced $\mathcal{L}_2$ gain of
LTV systems \cite{tadmor90,ravi91,green95,chen00}.  



\begin{thm}
\label{thm:BRL}
Let $(Q,S,R,F)$ be given with $R(t) \prec 0$ for all $t \in [0,T]$.
The following statements are equivalent:
\begin{enumerate}
  \item $\exists \epsilon >0$ such that $J(d) \le -\epsilon \|d\|_{2,[0,T]}^2$
    $\forall d \in \Ltwo$.
  \item There exists a differentiable function
    $Y:[0,T] \rightarrow \Sm^n$ such that $Y(T)=F$ and
    \begin{align*}
      \dot{Y} +  A^T Y + YA +Q - (YB+S) R^{-1} (YB+S)^T = 0
    \end{align*}
    This is a Riccati Differential Equation (RDE).
  \item There exists $\epsilon >0$ and a differentiable function
    $P:[0,T] \rightarrow \Sm^n$ such that $P(T)\succeq F$ and
    \begin{align*}
      \dot{P} & +  A^T P + P A +Q \\
      &  - (PB+S) R^{-1} (PB+S)^T  \preceq -\epsilon I
    \end{align*}
    This is a strict Riccati Differential Inequality (RDI).
\end{enumerate}
\end{thm}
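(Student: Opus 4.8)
The unifying device is a completion-of-squares identity. For any differentiable $Z:[0,T]\to\Sm^{n_x}$, differentiate $x^TZx$ along \eqref{eq:LTV1}, add the cost integrand, and complete the square in $d$ (legitimate since $R(t)\prec0$ is invertible). Using $x(0)=0$, this gives, for every $d\in\Ltwo$,
\begin{align*}
  J(d) &= x(T)^T\big(F-Z(T)\big)x(T) \\
  &\quad + \int_0^T x^T \mathcal R(Z)\,x\,dt + \int_0^T \tilde d^T R\,\tilde d\,dt,
\end{align*}
where $\mathcal R(Z):=\dot Z+A^TZ+ZA+Q-(ZB+S)R^{-1}(ZB+S)^T$ and $\tilde d:=d+R^{-1}(ZB+S)^Tx$. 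I would prove the theorem as the cycle $(3)\Rightarrow(1)\Rightarrow(2)\Rightarrow(3)$.

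For $(3)\Rightarrow(1)$, apply the identity with $Z=P$. The boundary term is nonpositive since $P(T)\succeq F$; the middle term is at most $-\epsilon\|x\|_{2,[0,T]}^2$ since $\mathcal R(P)\preceq-\epsilon I$; and the last term is at most $-c\|\tilde d\|_{2,[0,T]}^2$, where $R\preceq-cI$ follows from $R(t)\prec0$ on the compact interval. Writing $d=\tilde d-R^{-1}(PB+S)^Tx$ gives $\|d\|_2\le\|\tilde d\|_2+C\|x\|_2$, so the two negative terms dominate $-\epsilon'\|d\|_{2,[0,T]}^2$ for some $\epsilon'>0$, which is (1).

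For $(2)\Rightarrow(3)$, I would perturb the Riccati flow: solve the RDE with $Q$ replaced by $Q+\delta I$ and terminal value $F+\delta I$. Since the unperturbed solution $Y$ exists on the compact interval $[0,T]$, continuous dependence of the (locally Lipschitz) Riccati right-hand side on its data ensures the perturbed solution $P_\delta$ also exists on all of $[0,T]$ for $\delta>0$ small. By construction $P_\delta(T)=F+\delta I\succeq F$ and $\mathcal R(P_\delta)=-\delta I\preceq-\epsilon I$ with $\epsilon=\delta$, giving (3).

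The substance is $(1)\Rightarrow(2)$: producing a bounded RDE solution on all of $[0,T]$, i.e.\ excluding finite escape. I would integrate the RDE backward from $Y(T)=F$, obtaining a unique solution $Y$ on a maximal subinterval $(\tau,T]$. On that interval the identity---applied on $[t_0,T]$ from $x(t_0)=x_0$, where $\mathcal R(Y)\equiv0$---identifies $Y$ with a value function, $x_0^TY(t_0)x_0=\sup_d J_{t_0}(x_0,d)$, with $J_{t_0}(x_0,d)$ the cost accrued on $[t_0,T]$ from $x(t_0)=x_0$ and the supremum attained at $d^\ast=-R^{-1}(YB+S)^Tx$. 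The key step is a uniform a priori bound $\|Y(t_0)\|\le c_1$ on $(\tau,T]$. To obtain it, observe that extending any $d\in\mathcal L_2[t_0,T]$ by zero on $[0,t_0]$ produces, from $x(0)=0$, a trajectory with $x(t_0)=0$; hence $J_{t_0}(0,d)$ equals $J$ evaluated at this extension and is $\le-\epsilon\|d\|_{2,[t_0,T]}^2$ by (1). As the quadratic-in-$d$ part of $J_{t_0}(x_0,\cdot)$ does not depend on $x_0$, the cost-to-go is uniformly concave with Hessian $\preceq-\epsilon I$ for every $t_0$, while its remaining ($x_0$-dependent) linear and constant parts are bounded by $C\|x_0\|$ and $C\|x_0\|^2$ on the finite horizon. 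Maximizing this concave quadratic yields $|x_0^TY(t_0)x_0|\le c_1\|x_0\|^2$ uniformly, which precludes $\|Y(t_0)\|\to\infty$; therefore $\tau<0$ and $Y$ extends to $[0,T]$. This zero-extension step, which transfers the full-horizon coercivity in (1) to every sub-horizon, is the main obstacle.
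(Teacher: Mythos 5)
Your proposal is correct, and it diverges from the paper in two of its three legs. The paper proves $(3\Rightarrow 1)$ exactly as you do (its dissipation-inequality argument is your completion-of-squares identity with $Z=P$, organized through a Schur complement), but it handles the remaining implications differently: for $(1\Rightarrow 2)$ it goes through a Hamiltonian two-point boundary value problem, showing via Lemma~\ref{lem:TPBVP} that coercivity of $J$ excludes conjugate points $(t_0,T)$, so that the block $X_1(t,T)$ of the Hamiltonian transition matrix is nonsingular and $Y=X_2X_1^{-1}$ is an explicit global RDE solution; and instead of your $(2\Rightarrow 3)$ it proves $(1\Rightarrow 3)$ by perturbing the cost ($\tilde Q := Q+\tilde\epsilon I$, justified by the finite gain from $d$ to $x$ on a finite horizon) and re-invoking $(1\Rightarrow 2)$, together with a separate $(2\Rightarrow 1)$ via completion of squares and invertibility of the system $W$ mapping $d\mapsto d-\bar d$. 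Your route replaces the Hamiltonian construction with a maximal-interval/escape-time argument: integrate the RDE backward, identify $Y$ with the LQ value function on $[t_0,T]$, and obtain a uniform a priori bound by transferring the coercivity in (1) to sub-horizons via zero-extension of $d$ --- a trick that is in fact the germ of the paper's Lemma~\ref{lem:TPBVP}, whose $\bar d$ also vanishes on $[0,t_0]$. What each approach buys: the paper's factorization $Y=X_2X_1^{-1}$ is constructive and sidesteps attainment and uniformity questions entirely (and the TPBVP is reused later to build worst-case disturbances in the robot-arm example), while your argument is more self-contained and elementary, needing only standard ODE facts; your uniformity claims (Hessian $\preceq -2\epsilon I$ independent of $t_0$, linear/constant parts bounded by $C\|x_0\|\,\|d\|$ and $C\|x_0\|^2$) do hold on the compact horizon with bounded piecewise-continuous data, and the supremum is attained at the closed-loop $d^\ast$ wherever $Y$ exists. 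Likewise your $(2\Rightarrow 3)$ by continuous dependence of the Riccati flow (Carath\'eodory conditions suffice for the piecewise-continuous coefficients) is a clean alternative to the paper's cost-perturbation, avoiding the finite-gain step; note also that your cycle $3\to 1\to 2\to 3$ covers the equivalence without the paper's separate $(2\Rightarrow 1)$ argument.
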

\begin{proof} 
  The proof of ($3 \Rightarrow 1$) is given below as it highlights the
  dissipation inequality framework.  The remainder of the proof is in
  Appendix~\ref{sec:BRLproof} for completeness.


  $\mathbf{(3 \Rightarrow 1)}$ By the Schur complement lemma \cite{boyd94},
  the RDI and $R(t)\prec 0$ imply $\exists \tilde{\epsilon}>0$ such
  that
  \begin{align}
    \label{eq:RDI2by2}
    \bmtx \dot{P}+A^T P+PA & PB \\ B^T P & 0 \emtx +
    \bmtx Q & S \\ S^T & R \emtx \preceq -\tilde\epsilon I
  \end{align}
  Next define a a quadratic storage function $V(x,t):=x^T P(t) x$.
  Let $x(t)$ be a solution of the LTV system (Equation~\ref{eq:LTV1})
  starting from $x(0)=0$ and forced by some input $d\in \Ltwo$.
  Multiply Equation~\ref{eq:RDI2by2} on the left and right by
  $[x(t)^T \, d(t)^T]$ and its transpose to obtain the following
  dissipation inequality:
  \begin{align}
    \dot{V} + \bmtx x \\ d \emtx^T
    \bmtx Q & S \\ S^T & R \emtx \bmtx x \\ d \emtx
    \le -\tilde\epsilon \bmtx x \\ d \emtx^T  \bmtx x \\ d \emtx
  \end{align}
  Integrate the dissipation inequality from $t=0$ to $t=T$:
  \begin{align*}
    & V(x(T),T) - V(x(0),0)  \\
    & + \int_0^T \bsmtx x(t) \\ d(t) \esmtx^T
    \bsmtx Q(t) & S(t) \\ S(t)^T & R(t) \esmtx \bsmtx x(t) \\ d(t) \esmtx
    \, dt
    \le -\tilde \epsilon \left\| \bsmtx x \\ d \esmtx \right\|_{2,[0,T]}^2
  \end{align*}
  Apply the boundary condition $P(T)\succeq F$ to obtain:
  \begin{align}
    \label{eq:JbndWithIC}
    J(d) \le V(x(0),0)-\tilde\epsilon \|d\|_{2,[0,T]}^2
  \end{align}
  This bound is valid for any $d\in L_2[0,T]$. Hence, applying
  $x(0)=0$ yields $J(d) \le -\tilde\epsilon \|d\|_{2,[0,T]}^2$
  $\forall d \in \Ltwo$.
\end{proof}
\vspace{0.05in}


This theorem assumes zero initial conditions $x(0)=0$.  This implies
that the initial stored energy $V(x(0),0)$ is zero and hence is
dropped from Equation~\ref{eq:JbndWithIC} in the proof. Non-zero
initial conditions can be incorporated by retaining this initial
stored energy in the performance bound.

Nominal performance is most easily assessed using the RDE.  The
performance $J(d) \le -\epsilon \|d\|_{2,[0,T]}^2$ is
achieved if the associated RDE exists on $[0,T]$ when integrated
backward from $Y(T)=F$.  The assumption $R(t)\prec 0$ ensures $R(t)$
is invertible and hence the RDE is well-defined for all $t\in [0,T]$.
Thus the solution of the RDE exists on $[0,T]$ unless it grows
unbounded.  As a concrete example, it was noted in
Section~\ref{sec:Jcost} that
$J(d) = \| e\|_{2,[0,T]}^2 - \gamma^2 \| d\|_{2,[0,T]}^2$ for specific
choices of $(Q,S,R,F)$. The matrix $R$, and hence the RDE, depends on
the choice of $\gamma$. For a fixed $\gamma>0$, the performance
$\|G\|_{2,[0,T]} < \gamma$ is achieved if the associated RDE exists on
$[0,T]$ when integrated backward from $Y(T)=0$.  The smallest bound on
the induced $\mathcal{L}_2$ gain can be found via bisection. The RDI
will be used later to assess the robustness of uncertain LTV systems.

%

\section{Robust Performance}
\label{sec:robperf}

\subsection{Uncertain LTV Systems}


An uncertain, finite-horizon LTV system is given by the
interconnection $F_u(G,\Delta)$ of a nominal LTV system $G$
and a perturbation $\Delta$ as shown in Figure~\ref{fig:Gunc}. The LTV
system $G$ is described by the following state-space model:
\begin{equation}
  \label{eq:LTVnom}
  \begin{split}
    & \dot{x}_G(t) = A_G(t)\, x_G(t) + B_{G1}(t)\,w(t) +  B_{G2}(t)\, d(t) \\
    & v(t)=C_{G1}(t)\,x_G(t)+D_{G11}(t)\, w(t)+ D_{G12}(t) \,d(t)\\
    & e(t)=C_{G2}(t)\,x_G(t)+D_{G21}(t)\, w(t)+ D_{G22}(t) \,d(t) 
  \end{split}
\end{equation}
where $x_G \in \R^{n_G}$ is the state. The inputs are $w\in \R^{n_w}$
and $d\in \R^{n_d}$ while $v\in \R^{n_v}$ and $e\in \R^{n_e}$ are
outputs.  The state matrices are piecewise continuous (bounded)
functions of time with appropriate dimensions, e.g.
$A_G: [0,T] \rightarrow \R^{n_G \times n_G}$.  The perturbation
$\Delta:\mathbf{L}_2^{n_v}[0,T] \rightarrow \mathbf{L}_2^{n_w}[0,T]$
is a bounded, causal operator. Well-posedness of the interconnection
$F_u(G,\Delta)$ is defined as follows.

\vspace{0.05in}
\begin{defn}
\label{def:wellpose}
$F_u(G,\Delta)$ is \underline{well-posed} if for all
$x_G(0)\in\R^{n_G}$ and $d\in \mathbf{L}_2^{n_d}[0,T]$ there exists
unique solutions $x_G\in \mathbf{L}_2^{n_G}[0,T]$,
$v\in \mathbf{L}_2^{n_v}[0,T]$, $e \in \mathbf{L}_2^{n_e}[0,T]$, and
$w\in \mathbf{L}_2^{n_w}[0,T]$ satisfying Equation~\eqref{eq:LTVnom}
and $w=\Delta(v)$ with a causal dependence on $d$.
\end{defn}
\vspace{0.05in}



The perturbation $\Delta$ can have block-structure as is standard in
robust control modeling \cite{zhou96}.  It can include blocks that are
hard nonlinearities (e.g. saturations) and infinite dimensional
operators (e.g. time delays) in addition to true system
uncertainties. The term ``uncertainty'' is used for simplicity when
referring to $\Delta$.

\subsection{Integral Quadratic Constraints (IQCs) }

IQCs \cite{megretski97} are used to describe the input/output behavior
of $\Delta$. They can be formulated in either the frequency or time
domain. The time domain formulation is more useful for analysis of
uncertain time-varying systems.  This formulation is based on the
graphical interpretation in Figure~\ref{fig:iqcpsi}.  The inputs and
outputs of $\Delta$ are filtered through an LTI system $\Psi$ with
zero initial condition $x_\psi(0)=0$. The dynamics of $\Psi$ are given
as follows:
\begin{align}
\label{eq:Psi}
\begin{split}
  & \dot{x}_\psi(t) = A_\psi \, x_\psi(t) + B_{\psi 1} \, v(t) + B_{\psi 2} \, w(t) \\
  & z(t) = C_\psi \, x_\psi(t) + D_{\psi 1} \, v(t) + D_{\psi 2} \, w(t) 
\end{split}
\end{align}
where $x_\psi \in \R^{n_\psi}$ is the state.   A time domain IQC is an
inequality enforced on the output $z$ over a finite horizon. The
formal definition is given next.


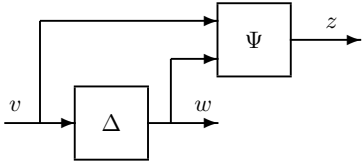
\begin{figure}[htbp]
\centering
\scalebox{0.9}{
\begin{picture}(160,70)(0,5)
\thicklines
\put(2,25){$v$}
\put(0,20){\vector(1,0){30}}
\put(30,5){\framebox(30,30){$\Delta$}}
\put(80,25){$w$}
\put(60,20){\vector(1,0){30}}
\put(15,20){\line(0,1){43}}
\put(15,63){\vector(1,0){75}}
\put(70,20){\line(0,1){27}}
\put(70,47){\vector(1,0){20}}
\put(90,40){\framebox(30,30){$\Psi$}}
\put(135,60){$z$}
\put(120,55){\vector(1,0){30}}
\end{picture}
} 
\caption{Graphical interpretation for time domain IQCs}
\label{fig:iqcpsi}
\end{figure}
\vspace{0.05in}

\begin{defn}
  \label{def:tdiqc}
  Let $\Psi \in \RH^{n_z \times (n_v+n_w)}$ and
  $M:[0,T] \rightarrow \Sm^{n_z}$ be given with $M$ piecewise
  continuous. A bounded, causal operator
  $\Delta:\mathbf{L}_2^{n_v}[0,T] \rightarrow \mathbf{L}_2^{n_w}[0,T]$
  satisfies the \underline{time domain IQC} defined by $(\Psi, M)$ if
  the following inequality holds for all
  $v \in \mathcal{L}_2^{n_v}[0,T]$ and $w=\Delta(v)$:
  \begin{align}
    \label{eq:tdiqc}
    \int_0^T z(t)^T M(t) z(t) \, dt \, \ge 0
  \end{align}
  where $z$ is the output of $\Psi$ driven by inputs $(v,w)$ with
  zero initial conditions $x_\psi(0)=0$.
\end{defn}
\vspace{0.05in}

The notation $\Delta \in \mathcal{I}(\Psi,M)$ is used when
$\Delta$ satisfies the corresponding IQC.  Time domain
IQCs, as defined above, are specified as finite-horizon constraints on
the outputs of $\Psi$. These are often referred to as hard IQCs
\cite{megretski97}. The definition given here only requires the IQC to
hold over the analysis horizon $[0,T]$. This is in contrast to hard
IQCs used for infinite horizon analysis which require the constraint
to hold over all finite time horizons.  Two examples of time domain
IQCs are provided below.

\begin{ex}
  \label{ex:LTIunc}
  Consider an LTI uncertainty $\Delta \in \RH$ with
  $\| \Delta \|_\infty \le 1$. Let $\Pi_{11} \in \RL$ be given with
  $\Pi_{11}(j\omega) = \Pi_{11}(j\omega)^* \ge 0$ for all
  $\omega \in \R \cup \{+\infty\}$. Then the following frequency
  domain IQC holds $\forall v \in L_{2}^{n_v}$ and $w=\Delta(v)$
  \begin{align}
    \label{eq:fdiqc}
    \int_{-\infty}^{\infty} \bsmtx V(j\omega) \\ W(j\omega) \esmtx^*
    \underbrace{\bsmtx \Pi_{11}(j\omega) & 0 \\ 0 & -\Pi_{11}(j\omega)
      \esmtx}_{:= \Pi(j\omega)} \bsmtx V(j\omega) \\ W(j\omega) \esmtx
    d\omega \ge 0
  \end{align}
  where $V$ and $W$ are Fourier transforms of $v$ and $w$. This IQC
  corresponds to the use of $D$-scales used in structured singular
  value $\mu$ analysis \cite{safonov80,doyle82,packard93,zhou96}.  
  A factorized representation for $\Pi$ yields a valid time domain
  IQC.  Specifically, let $\Pi = \Psi^\sim M \Psi$ where
  \begin{align}
    \label{eq:LTIuncIQC}
    \begin{split}
    \Psi & :=\bsmtx \Psi_{11} & 0 \\ 0 & \Psi_{11} \esmtx 
      \mbox{ with } \Psi \in \RH^{n_z \times 1} \\
    M & :=\bsmtx M_{11} & 0 \\ 0 & -M_{11} \esmtx
    \mbox{ with } M\in \Sm^{n_z} \mbox{ and } M_{11} \succeq 0
    \end{split}
  \end{align}
  It is shown in \cite{balakrishnan02} that $(\Psi,M)$ is a valid time
  domain IQC for $\Delta$ over any finite horizon $T<\infty$.
\end{ex}

\begin{ex}
  \label{ex:TVunc}
  Time domain IQCs are often specified with $\Psi$ as an LTI system
  and $M$ as a constant matrix.  Definition~\ref{def:tdiqc} above
  allows $M$ to be time-varying. This generalization is useful for
  time-varying real parameters. Let $\Delta:=\delta(t)$ where
  $\delta(t)\in \R$ and $|\delta(t)| \le 1$ for all $t \in [0,T]$.
  Define $\Psi:=I_2$ and
  $M(t):=\bsmtx m_{11}(t) & 0 \\ 0 & -m_{11}(t) \esmtx$ where
  $m_{11}:[0,T]\rightarrow \R$ is piecewise continuous and satisfies
  $m_{11}(t)\ge 0$.  Then $\Delta$ satisfies the time domain IQC
  defined by $(\Psi,M)$.  Time-varying IQCs can be defined for other
  uncertainties, e.g. see related work in \cite{pfifer15}.
\end{ex}

An extensive library of IQCs is provided in~\cite{megretski97} for
various types of perturbations. Most IQCs are specified in the
frequency domain using a multiplier $\Pi$.  Under some mild
assumptions, a valid time-domain IQC $(\Psi,M)$ can be constructed
from $\Pi$ via a $J$-spectral factorization \cite{seiler15}. This
allows the library of known (frequency domain) IQCs to be used for
time-domain, finite-horizon analysis.  More general IQC
parameterizations are not necessarily ``hard'' but can be handled with
the method in \cite{fetzer17}.


\subsection{Robust Induced $\mathcal{L}_2$ Gain}

The robustness of $F_u(G,\Delta)$ is analyzed using the
interconnection shown in Figure~\ref{fig:IQCAugment}.  The extended
system of $G$ (Equation~\ref{eq:LTVnom}) and the IQC filter $\Psi$
(Equation~\ref{eq:Psi}) is governed by the following state space
model:
\begin{align}
\label{eq:extsys}
\begin{split}
\dot{x}(t) & = \mathcal{A}(t) \, x(t) + \mathcal{B}(t) \,
     \bsmtx w(t) \\ d(t) \esmtx \\
z(t) & = \mathcal{C}_1(t) \, x(t) + \mathcal{D}_1(t) \,
     \bsmtx w(t) \\ d(t) \esmtx \\
e(t) & = \mathcal{C}_2(t) \, x(t) + \mathcal{D}_2(t) \,
     \bsmtx w(t) \\ d(t) \esmtx 
\end{split}
\end{align}
The extended state vector is $x:=\bsmtx x_G \\ x_\psi \esmtx \in \R^n$
where $n:=n_G+n_\psi$.  The state-space matrices are given by
(dropping the dependence on time $t$):
\begin{align*}
& \mathcal{A} := \bmtx A_G & 0 \\  B_{\psi 1} C_{G1} & A_{\psi} \emtx, 
\, \mathcal{B}:= \bmtx B_{G1} & B_{G2} \\ 
   B_{\psi 1} D_{G11} + B_{\psi 2} & B_{\psi 1} D_{G 12} \emtx \\
& \mathcal{C}_1 := \bmtx D_{\psi 1} C_{G1} & C_{\psi} \emtx, 
\, \mathcal{C}_2 := \bmtx C_{G2}\,\, & 0 \emtx, \\
& \mathcal{D}_1 := \bmtx D_{\psi 1} D_{G11}  + D_{\psi 2} & 
    D_{\psi 1} D_{G12} \emtx \\
& \mathcal{D}_2 := \bmtx D_{G21} & D_{G22} \emtx
\end{align*}

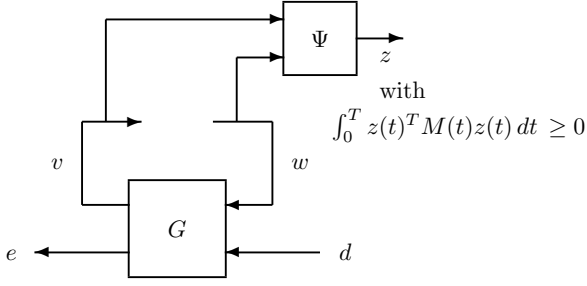
\begin{figure}[h]
\centering
\scalebox{0.9}{
\begin{picture}(220,120)(23,20)
 \thicklines
 \put(75,25){\framebox(40,40){$G$}}
 \put(163,32){$d$}
 \put(155,35){\vector(-1,0){40}}  
 \put(23,32){$e$}
 \put(75,35){\vector(-1,0){40}}  
 \put(42,70){$v$}
 \put(55,55){\line(1,0){20}}  
 \put(55,55){\line(0,1){35}}  
 \put(55,90){\vector(1,0){25}}  
 \put(143,70){$w$}
 \put(135,90){\line(-1,0){25}}  
 \put(135,55){\line(0,1){35}}  
 \put(135,55){\vector(-1,0){20}}  
 \put(65,90){\line(0,1){43}}
 \put(65,133){\vector(1,0){75}}
 \put(120,90){\line(0,1){27}}
 \put(120,117){\vector(1,0){20}}
 \put(140,110){\framebox(30,30){$\Psi$}}
 \put(180,115){$z$}
 \put(180,100){with}
 \put(160,85){$\int_0^T z(t)^T M(t) z(t) \, dt \, \ge 0$}
 \put(170,125){\vector(1,0){20}}
\end{picture}
} 
\caption{Extended LTI system of $G$ and filter $\Psi$.}
\label{fig:IQCAugment}
\end{figure}

The actual system to be analyzed is $F_u(G,\Delta)$ with input $d$ and
initial condition $x_G(0)=x_{G,0}$.  The analysis is instead performed
with the extended LTV system (Equation~\ref{eq:extsys}) and the
constraint $\Delta \in \mathcal{I}(\Psi,M)$.  The constrained extended
system has inputs $(d,w)$ and initial condition
$x(0)=\bsmtx x_{G,0} \\ 0 \esmtx$.  The IQC implicitly constrains the
input $w$.  {\it The IQC covers $\Delta$ such that the constrained
  extended system without $\Delta$ includes all behaviors of the
  original system $F_u(G,\Delta)$.} 


The following differential matrix inequality is used to assess the
robust performance of $F_u(G,\Delta)$~\footnote{The notation
  $(\cdot)^T$ in \eqref{eq:Rob_LMI} corresponds to an omitted factor
  required to make the corresponding term symmetric.}:
\begin{align}
\label{eq:Rob_LMI}
\begin{split}
&\bmtx \dot{P}+\mathcal{A}^T P + P \mathcal{A}  &  P \mathcal{B}  \\
   \mathcal{B}^T P & 0  \emtx
+ \bmtx Q & S \\ S^T & R \emtx \\
& \hspace{0.2in}
+ (\cdot)^T M \bmtx \mathcal{C}_1 & \mathcal{D}_1 \emtx  
\preceq -\epsilon I 
\end{split}
\end{align}
This inequality depends on the extended system, IQC, and quadratic
cost $(Q,S,R,F)$. It is compactly denoted as
$DLMI_{Rob}(P,M,\gamma^2,t)\preceq -\epsilon I$.  This notation
emphasizes that the constraint is a differential linear matrix
inequality (DLMI) in $(P,M,\gamma^2)$ for fixed $(G,\Psi)$ and
$(Q,S,R,F)$. The dependence on $(G,\Psi)$ and $(Q,S,R,F)$ is not
explicitly denoted but will be clear from context.


The next theorem formulates a sufficient condition to bound the
(robust) induced $\mathcal{L}_2$ gain of $F_u(G,\Delta)$.  The proof
uses IQCs and a standard dissipation argument~\cite{schaft99,
  willems72a, willems72b, khalil01}.  For induced $\mathcal{L}_2$
gains the quadratic cost matrices are chosen as $F:=0$ and
\begin{align}
\label{eq:RobL2toL2_QSRF}
\begin{split}
& Q(t):= \mathcal{C}_2(t)^T \mathcal{C}_2(t),
\,\,\, S(t):= \mathcal{C}_2(t)^T \mathcal{D}_2(t) \\
& R(t):= \mathcal{D}_2(t)^T \mathcal{D}_2(t) 
  - \gamma^2 \bsmtx 0_{n_w} & 0 \\ 0 & I_{n_d} \esmtx
\end{split}
\end{align}

\begin{thm}
\label{thm:RobL2toL2}
Let $G$ be an LTV system defined by \eqref{eq:LTVnom} and
$\Delta : \mathcal{L}^{n_v}[0,T] \rightarrow \mathcal{L}^{n_w}[0,T]$
be a bounded, causal operator. Assume $F_u(G,\Delta)$ is well-posed
and $\Delta \in \mathcal{I}(\Psi,M)$.  If there exists $\epsilon>0$,
$\gamma>0$ and a differentiable function $P:[0,T] \rightarrow \Sm^n$
such that $P(T) \succeq F$ and
\begin{align}
  DLMI_{Rob}(P,M,\gamma^2,t) \preceq -\epsilon I 
   \,\,\,\, \forall t \in [0,T]
\end{align}
then $\| F_u(G,\Delta) \|_{2,[0,T]} < \gamma$.
\end{thm}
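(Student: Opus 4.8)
The plan is to reproduce the dissipation-inequality argument from the $(3\Rightarrow 1)$ part of Theorem~\ref{thm:BRL}, but applied to the extended system~\eqref{eq:extsys} and augmented with the IQC so as to absorb the extra quadratic term contributed by $M$. First I would fix an arbitrary input $0\ne d\in\Ltwo$ and invoke well-posedness of $F_u(G,\Delta)$ with zero initial condition $x_G(0)=0$ to obtain the unique signals $x_G$, $v$, $w=\Delta(v)$, and $e$. Feeding $(v,w)$ into the filter $\Psi$ from $x_\psi(0)=0$ generates $z$ and the extended state $x=\bsmtx x_G \\ x_\psi \esmtx$, which by construction satisfies~\eqref{eq:extsys} with $x(0)=0$. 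Every object appearing in $DLMI_{Rob}$ is thereby realized along a genuine trajectory of the interconnection.

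Next I would set $V(x,t):=x^T P(t)x$ and evaluate the DLMI along this trajectory by multiplying $DLMI_{Rob}(P,M,\gamma^2,t)\preceq-\epsilon I$ on the left and right by $\bsmtx x^T & w^T & d^T \esmtx$ and its transpose. The three summands collapse as follows: the first block produces $\dot V$, using $\dot x=\mathcal{A}x+\mathcal{B}\bsmtx w \\ d \esmtx$; the cost block $(Q,S,R)$ of~\eqref{eq:RobL2toL2_QSRF} completes the square into $e^T e-\gamma^2 d^T d$, since $e=\mathcal{C}_2 x+\mathcal{D}_2\bsmtx w \\ d \esmtx$ and the $w$-component of $R$ is zeroed; and the factored term yields $z^T M z$, since $z=\mathcal{C}_1 x+\mathcal{D}_1\bsmtx w \\ d \esmtx$. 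The right-hand side contributes $-\epsilon(|x|^2+|w|^2+|d|^2)\le-\epsilon|d|^2$ pointwise in $t$.

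Integrating this pointwise inequality over $[0,T]$ then gives
\begin{align*}
& V(x(T),T)-V(x(0),0)+\|e\|_{2,[0,T]}^2 \\
& \quad -\gamma^2\|d\|_{2,[0,T]}^2+\int_0^T z^T M z\,dt \le -\epsilon\|d\|_{2,[0,T]}^2 .
\end{align*}
I would then discard three nonnegative contributions: $V(x(0),0)=0$ by $x(0)=0$; $V(x(T),T)\ge 0$ by $P(T)\succeq F=0$; and, crucially, $\int_0^T z^T M z\,dt\ge 0$ by the hypothesis $\Delta\in\mathcal{I}(\Psi,M)$, which applies precisely because $w=\Delta(v)$ holds along this trajectory. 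What survives is $\|e\|_{2,[0,T]}^2\le(\gamma^2-\epsilon)\|d\|_{2,[0,T]}^2$, and taking the supremum over $0\ne d\in\Ltwo$ delivers the strict bound $\|F_u(G,\Delta)\|_{2,[0,T]}<\gamma$.

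The main obstacle is conceptual rather than computational. The DLMI by itself only certifies a bound for the \emph{extended} system in which $w$ is an unconstrained exogenous signal; this enlarged behavior set covers $F_u(G,\Delta)$, but the resulting inequality retains the spurious term $\int_0^T z^T M z\,dt$. The key move is to restrict attention to true trajectories of the interconnection, where $w=\Delta(v)$, so that the IQC renders this term nonnegative and permits it to be dropped. Well-posedness is exactly what guarantees that such trajectories exist and are unique for every admissible $d$, so the induced-gain supremum is taken over a well-defined signal set.
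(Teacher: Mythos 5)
Your proposal is correct and follows essentially the same route as the paper's proof: the same extended system with $x_\psi(0)=0$, the same storage function $V(x,t)=x^TP(t)x$, the same left/right multiplication of the DLMI by $[x^T,\,w^T,\,d^T]$, the same rewriting of the $(Q,S,R)$ block as $e^Te-\gamma^2 d^Td$, and the same final step of dropping $V(x(T),T)\ge 0$ and the IQC term after integration. The only (immaterial) difference is that the paper carries a general initial condition $x_{G,0}$ through the integrated inequality before setting it to zero, which it uses afterwards to discuss nonzero initial states, whereas you impose $x(0)=0$ from the outset.
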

\begin{proof}
  Let $d \in \Ltwo$ and $x_{G}(0)=x_{G,0}$ be given.  By
  well-posedness, $F_u(G,\Delta)$ has a unique solution $(x_G,v,w,e)$.
  As noted above, the extended system and IQC ``cover'' this system.
  In particular, forcing $\Psi$ with $(v,w)$ from $x_{\psi}(0)=0$
  yields $(x_\psi,z)$.  Define $x := \bsmtx x_G \\ x_\psi \esmtx$.
  Then $(x,z,e)$ are a solution of the extended system
  \eqref{eq:extsys} with inputs $(w,d)$ and initial condition
  $x(0)=\bsmtx x_{G,0} \\ 0 \esmtx$.  Moreover, $z$ satisfies the the
  IQC defined by $(\Psi,M)$.


  Define a storage function by $V(x,t) := x^T P(t) x$.  Left and right
  multiply the DLMI \eqref{eq:Rob_LMI} by $[x^T, w^T, d^T]$ and its
  transpose to show that $V$ satisfies the following dissipation
  inequality for all $t\in [0,T]$:
  \begin{align}
    \dot{V} + 
    \bmtx x \\ \bsmtx w \\ d \esmtx \emtx^T
    \bmtx Q & S \\ S^T & R \emtx 
    \bmtx x \\ \bsmtx w \\ d \esmtx \emtx
    + z^T M z
    \le -\epsilon \, d^T d
  \end{align}
  Use the choices for $(Q,S,R)$ in \eqref{eq:RobL2toL2_QSRF} to
  rewrite the second term as $e^Te - \gamma^2 d^T d$.
  Integrate over $[0,T]$ to obtain:
  \begin{align*}
    & x(T)^T P(T)x(T) - x_{G,0}^T P_{11}(0) x_{G,0} 
     + \int_{0}^{T}z^T(t) M(t)z(t) dt \\
    & - (\gamma^2-\epsilon) \|d\|^2_{2,[0,T]} +
    \|e\|^2_{2,[0,T]} \leq 0.
  \end{align*}
  Apply $P(T)\succeq F=0$ and $\Delta \in \mathcal{I}(\Psi,M)$ to conclude:
  \begin{align}
    \label{eq:RobL2toL2_WithIC}
    \|e\|^2_{2,[0,T]} \leq x_{G,0}^T P_{11}(0) x_{G,0} 
             + (\gamma^2-\epsilon) \|d\|^2_{2,[0,T]}
  \end{align}
  Finally, if $x_G(0)=0$ then 
  $\|F_u(G,\Delta)\|_{2,[0,T]} < \gamma$.
\end{proof}
\vspace{0.05in}

The effect of non-zero initial conditions $x_{G,0} \ne 0$ is captured
in Equation~\ref{eq:RobL2toL2_WithIC}.  If $d\equiv 0$, this
simplifies to
$x_{G,0}^T P_{11}(0) x_{G,0} \ge \|e\|^2_{2,[0,T]} \ge 0$ implying
$P_{11}(0) \succeq 0$.  Furthermore, on $[\tau, T]$ this implies
$P_{11} (\tau) \succeq 0$ for all $\tau \in [0,T]$. The IQC is valid
only for $x_\psi(0)=0$ and hence $P(\tau) \succeq 0$ need not hold in
general.


\subsection{Robust $\mathcal{L}_2$-to-Euclidean Gain}


A similar theorem provides a bound on the $\mathcal{L}_2$-to-Euclidean
gain of $F_u(G,\Delta)$. This requires the additional assumptions that
$D_{G21}(T)=0$ and $D_{G22}(T)=0$ so that $\mathcal{D}_2(T)=0$. Hence
$e(T) = \mathcal{C}_2(T) x(T)$ and the gain from $d$ to $e(T)$ is
well-defined.  To assess the robust $\mathcal{L}_2$-to-Euclidean gain
define $(Q,S,R,F)$ as:

\begin{align}
\label{eq:RobL2toE_QSRF}
\begin{split}
& Q(t):= 0, \, S(t):= 0, \,
R(t):= - \gamma^2 \bsmtx 0_{n_w} & 0 \\ 0 & I_{n_d} \esmtx, \\
& F:= \mathcal{C}_2^T(T) \mathcal{C}_2(T) 
= \bsmtx C_{G2}(T)^T C_{G2}(T) & 0 \\ 0 & 0 \esmtx
\end{split}
\end{align}
With these choices for $(Q,S,R,F)$ the next theorem is a minor
adaptation of Theorem~\ref{thm:RobL2toL2} and the proof is omitted.

\begin{thm}
\label{thm:RobL2toE}
Let $G$ be an LTV system defined by \eqref{eq:LTVnom} with
$D_{G21}(T)=0$ and $D_{G22}(T)=0$. Let
$\Delta : \mathcal{L}^{n_v}[0,T] \rightarrow \mathcal{L}^{n_w}[0,T]$
be a bounded, causal operator. Assume $F_u(G,\Delta)$ is well-posed
and $\Delta \in \mathcal{I}(\Psi,M)$.  If there exists $\epsilon>0$,
$\gamma>0$, and a differentiable function $P:[0,T] \rightarrow \Sm^n$
and such that $P(T) \succeq F$ and
\begin{align}
  DLMI_{Rob}(P,M,\gamma^2,t) \preceq -\epsilon I 
   \,\,\,\, \forall t \in [0,T]
\end{align}
then $\|F_u(G,\Delta)\|_{E,[0,T]} < \gamma$.
\end{thm}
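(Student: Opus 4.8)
The plan is to mirror the dissipation-inequality argument of Theorem~\ref{thm:RobL2toL2}, changing only the bookkeeping forced by the new cost data $(Q,S,R,F)$ in \eqref{eq:RobL2toE_QSRF}. First I would fix $d\in\Ltwo$, set $x_G(0)=0$, and invoke well-posedness to obtain the unique solution $(x_G,v,w,e)$ of $F_u(G,\Delta)$. Driving $\Psi$ with $(v,w)$ from $x_\psi(0)=0$ produces $(x_\psi,z)$, so the extended state $x=\bsmtx x_G \\ x_\psi \esmtx$ together with $(z,e)$ solves \eqref{eq:extsys} with initial condition $x(0)=0$; since $\Delta\in\mathcal{I}(\Psi,M)$ the output $z$ obeys $\int_0^T z^TMz\,dt\ge 0$.

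Next I would set $V(x,t):=x^TP(t)x$ and left/right multiply the DLMI \eqref{eq:Rob_LMI} by $[x^T,\,w^T,\,d^T]$ and its transpose to get the pointwise inequality
\begin{align*}
\dot V + \bmtx x \\ \bsmtx w \\ d \esmtx \emtx^T \bmtx Q & S \\ S^T & R \emtx \bmtx x \\ \bsmtx w \\ d \esmtx \emtx + z^TMz \le -\epsilon\,d^Td.
\end{align*}
The key structural change from the $\mathcal{L}_2$ case is that now $Q=0$ and $S=0$, so the middle quadratic form collapses to $-\gamma^2 d^Td$ and there is no running penalty on $e$. Integrating over $[0,T]$ and applying the IQC then yields $V(x(T),T)-V(x(0),0)\le(\gamma^2-\epsilon)\|d\|_{2,[0,T]}^2$.

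The output penalty must therefore be recovered entirely from the terminal data, and this is where the added hypotheses enter. Because $D_{G21}(T)=0$ and $D_{G22}(T)=0$ give $\mathcal{D}_2(T)=0$, the terminal output reduces to $e(T)=\mathcal{C}_2(T)x(T)$, so with $F=\mathcal{C}_2(T)^T\mathcal{C}_2(T)$ the boundary condition $P(T)\succeq F$ gives $V(x(T),T)=x(T)^TP(T)x(T)\ge x(T)^TFx(T)=\|e(T)\|_2^2$. Using $x(0)=0$, hence $V(x(0),0)=0$, I would combine these bounds to obtain $\|e(T)\|_2^2\le(\gamma^2-\epsilon)\|d\|_{2,[0,T]}^2<\gamma^2\|d\|_{2,[0,T]}^2$, i.e. $\|F_u(G,\Delta)\|_{E,[0,T]}<\gamma$.

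Given Theorem~\ref{thm:RobL2toL2}, the argument is almost entirely routine; the one step deserving care---and the point at which I would expect a careless reading to go wrong---is the transfer of the output penalty from the running integral to the terminal term. One must check that $\mathcal{D}_2(T)=0$ makes $x(T)^TFx(T)$ coincide \emph{exactly} with $\|e(T)\|_2^2$: any residual feedthrough at $t=T$ would leave cross and quadratic terms in $\bsmtx w(T)\\ d(T)\esmtx$ that the terminal penalty $F$ cannot absorb, which is precisely why the hypotheses $D_{G21}(T)=D_{G22}(T)=0$ are imposed.
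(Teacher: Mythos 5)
Your proof is correct and takes exactly the approach the paper intends: the paper omits this proof on the grounds that it is a ``minor adaptation'' of Theorem~\ref{thm:RobL2toL2}, and your write-up supplies precisely that adaptation---the same dissipation inequality with the running cost collapsing to $-\gamma^2 d^Td$ (since $Q=0$, $S=0$), the IQC discarded after integration, and $\|e(T)\|_2^2$ recovered from the terminal condition $P(T)\succeq F$ using $\mathcal{D}_2(T)=0$. Your closing remark correctly identifies why the hypotheses $D_{G21}(T)=D_{G22}(T)=0$ are needed, matching the paper's own discussion preceding the theorem.
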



The condition in Theorem~\ref{thm:RobL2toE} robustly bounds the states
$x_G(T)$ reachable by disturbances for any uncertainty
$\Delta \in \mathcal{I}(\Psi,M)$.  Note
$\mathcal{C}_2(T):=\bmtx C_{G2}(T) & 0 \emtx$ so that $e(T)$ only
depends on $x_G(T)$.  The IQC filter $\Psi$ is used only for analysis
and $x_\psi(T)$ is neglected in the bound.


Robust reachable sets with non-zero initial conditions $x_G(0) \ne 0$
can be computed with minor modifications. For example, assume the
initial condition of $G$ lies within the ellipsoid
$x_G(0) \in \mathcal{E}(E_0,1)$ for some $E_0\succ 0$.  The IQC still
requires $x_\psi(0)=0$.  Next, enforce
$P(0) \preceq \alpha_1 \bsmtx E_0 & 0 \\ 0 & 0 \esmtx$ for some
$\alpha_1>0$ (in addition to the conditions in
Theorem~\ref{thm:RobL2toE}).  It follows from the dissipation
inequality proof that the terminal state of $G$ is bounded by
$\|C_{G2}(T) x_G(T)\|_2 < \alpha_1 + \gamma \|d\|_{2,[0,T]}$.
Additional variations on robust reachable sets with non-zero initial
conditions can be found in Chapter 2 of \cite{moore15}.


\subsection{RDE Condition for Robust Performance}

Theorems~\ref{thm:RobL2toL2} and \ref{thm:RobL2toE} provide a
DLMI~\eqref{eq:Rob_LMI} to bound the induced $\mathcal{L}_2$ and
$\mathcal{L}_2$-to-Euclidean gain of $F_u(G,\Delta)$.  More
general robust performance conditions can be formulated by proper
choice of $(Q,S,R,F)$.  The numerical algorithm proposed in
Section~\ref{sec:comp} relies on an equivalence between the
DLMI~\eqref{eq:Rob_LMI} and a related RDE condition.  This equivalence
is demonstrated with an extended quadratic cost function $\mathcal{J}$
that combines the performance specification $(Q,S,R,F)$ and the IQC
$(\Psi,M)$. Specifically, define $\mathcal{J}$ with the extended
dynamics in \eqref{eq:extsys}:
$\dot{x} = \mathcal{A} x + \mathcal{B} \bsmtx w \\ d \esmtx$.  The
cost matrices $(\mathcal{Q},\mathcal{S},\mathcal{R},\mathcal{F})$ are
chosen as:
\begin{align}
  \label{eq:RobPerf_QSR}
  \begin{split}
    & \bmtx \mathcal{Q} & \mathcal{S} \\
    \mathcal{S}^T & \mathcal{R} \emtx :=
    (\cdot)^T M \bmtx \mathcal{C}_1 & \mathcal{D}_1 \emtx 
    +     \bmtx Q & S \\ S^T & R \emtx \\
    & \mathcal{F} := F
  \end{split}
\end{align}
The quadratic cost associated with these choices is:
\begin{align*}
  \mathcal{J}\left( \bsmtx w \\ d \esmtx \right) & :=
  x(T)^T \mathcal{F} x(T) 
  + \int_{0}^{T}z^T(t) M(t)z(t) dt \\
  & 
  + \int_0^T  \bmtx x(t) \\ \bsmtx w(t) \\ d(t) \esmtx \emtx^T
    \bmtx Q(t) & S(t) \\ S(t)^T & R(t) \emtx                 
    \bmtx x(t) \\ \bsmtx w(t) \\ d(t) \esmtx \emtx \, dt 
\end{align*}


The next corollary states the equivalence between the DLMI and RDE
conditions.  The DLMI can be rewritten as an RDI by the Schur
complement lemma \cite{boyd94}.  Hence the corollary follows
directly from Theorem~\ref{thm:BRL}.

\begin{cor}
\label{cor:RobL2toL2_RDE}
Let $(\mathcal{Q},\mathcal{S},\mathcal{R},\mathcal{F})$
be given by \eqref{eq:RobPerf_QSR}.  The
following are equivalent for any $\epsilon>0$ and
$\gamma>0$:
\begin{enumerate}
\item There exists a differentiable
  function $P:[0,T] \rightarrow \Sm^n$ such that $P(T) \succeq F$
  and $DLMI_{Rob}(P,M,\gamma^2,t)\preceq -\epsilon I$.
\item $\mathcal{R}(t) \prec 0$ for all $t \in [0,T]$. In addition,
  there exists a differentiable function $Y:[0,T] \rightarrow \Sm^n$
  such that $Y(T)=F$ and
  \begin{align*}
    \dot{Y} +  \mathcal{A}^T Y + Y \mathcal{A} + \mathcal{Q}
    - (Y\mathcal{B}+\mathcal{S}) 
    \mathcal{R}^{-1} (Y\mathcal{B}+\mathcal{S})^T = 0
  \end{align*}
\end{enumerate}
\end{cor}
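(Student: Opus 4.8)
The plan is to recognize $DLMI_{Rob}(P,M,\gamma^2,t)\preceq -\epsilon I$ as exactly the two-by-two block inequality \eqref{eq:RDI2by2} from the proof of Theorem~\ref{thm:BRL}, but applied to the extended dynamics $(\mathcal{A},\mathcal{B})$ with the combined cost $(\mathcal{Q},\mathcal{S},\mathcal{R},\mathcal{F})$. Substituting the definition \eqref{eq:RobPerf_QSR} into \eqref{eq:Rob_LMI} absorbs the IQC term $(\cdot)^T M \bmtx \mathcal{C}_1 & \mathcal{D}_1 \emtx$ into $(Q,S,R)$, so the DLMI reads
\begin{align*}
\bmtx \dot{P}+\mathcal{A}^T P + P \mathcal{A} + \mathcal{Q} & P\mathcal{B} + \mathcal{S} \\ \mathcal{B}^T P + \mathcal{S}^T & \mathcal{R} \emtx \preceq -\epsilon I .
\end{align*}
This is precisely the form \eqref{eq:RDI2by2} with $(A,B,Q,S,R,F)$ replaced by $(\mathcal{A},\mathcal{B},\mathcal{Q},\mathcal{S},\mathcal{R},\mathcal{F})$, and $\mathcal{F}=F$ matches the terminal condition.

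First I would extract the standing hypothesis $\mathcal{R}\prec 0$ required to invoke Theorem~\ref{thm:BRL}. The lower-right diagonal block of the negative-definite matrix above is $\mathcal{R}$, so any feasible $P$ forces $\mathcal{R}(t)\preceq -\epsilon I \prec 0$ for all $t$. Thus condition~1 automatically entails the first half of condition~2, while in condition~2 the inequality $\mathcal{R}\prec 0$ is assumed outright; in either case $\mathcal{R}\prec 0$ holds. With $\mathcal{R}\prec 0$ in force, the Schur complement lemma \cite{boyd94} shows that existence of a $P$ satisfying the block DLMI is equivalent to existence of a $P$ satisfying the strict Riccati differential inequality
\begin{align*}
\dot{P}+\mathcal{A}^T P+P\mathcal{A}+\mathcal{Q} - (P\mathcal{B}+\mathcal{S})\mathcal{R}^{-1}(P\mathcal{B}+\mathcal{S})^T \preceq -\epsilon' I ,
\end{align*}
which is exactly statement~3 of Theorem~\ref{thm:BRL} for the extended data.

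The two implications then follow symmetrically from the equivalence of statements~2 and~3 in Theorem~\ref{thm:BRL}. For $1\Rightarrow 2$, condition~1 gives $\mathcal{R}\prec 0$ and, via Schur complement, the strict RDI; Theorem~\ref{thm:BRL} $(3\Rightarrow 2)$ then produces the differentiable $Y$ with $Y(T)=F$ solving the RDE. For $2\Rightarrow 1$, the assumed $\mathcal{R}\prec 0$ together with the RDE solution invokes Theorem~\ref{thm:BRL} $(2\Rightarrow 3)$ to produce a $P$ with $P(T)\succeq F$ satisfying the strict RDI, which by Schur complement satisfies the block DLMI.

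The step I expect to be the main obstacle is the bookkeeping of the strictness margin across the Schur complement. The block DLMI carries the fixed margin $-\epsilon I$, whereas statement~3 of Theorem~\ref{thm:BRL} carries its own positive margin, and $\mathcal{R}\preceq -\epsilon I$ only yields $\mathcal{R}+\epsilon I\preceq 0$ (semidefinite), not the strict definiteness the clean Schur complement requires. I would therefore pass to a slightly smaller constant $\epsilon'\in(0,\epsilon)$, so that $\mathcal{R}+\epsilon' I\prec 0$ is strictly invertible, and verify that a strictly feasible $P$ for either form yields a (possibly different-margin) strictly feasible $P$ for the other --- rather than asserting that the same $\epsilon$ transfers verbatim.
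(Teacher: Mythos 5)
Your proposal is correct and follows essentially the same route as the paper, which likewise proves the corollary by substituting \eqref{eq:RobPerf_QSR} into the DLMI, converting to the strict RDI via the Schur complement lemma, and invoking the equivalence of statements 2 and 3 of Theorem~\ref{thm:BRL}. Your extra bookkeeping on the strictness margin (passing to $\epsilon'\in(0,\epsilon)$, and noting the $(2,2)$ block forces $\mathcal{R}\preceq-\epsilon I$) is a detail the paper leaves implicit in its one-line argument, and you handle it correctly.
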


\section{Computational Approach}
\label{sec:comp}

This section describes computational details and presents an algorithm
that combines complementary aspects of the DLMI and RDE robust
performance conditions.

\subsection{IQC Parameterization}


There is typically an infinite set of valid IQCs for a given
uncertainty $\Delta$.  Numerical implementations using IQCs often
involve a fixed choice for $\Psi$ and optimization subject to convex
constraints on $M$ \cite{megretski97,veenman16,palframan17}. The
algorithms given in the following sections will use such
parameterizations.  Two examples are given below.

\begin{ex}
  \label{ex:LTIuncParam}
  Consider an LTI uncertainty $\Delta \in \RH$ with
  $\|\Delta \|_\infty \le 1$. By Example~\ref{ex:LTIunc}, $\Delta$
  satisfies any IQC $(\Psi,M)$ with
  $\Psi :=\bsmtx \Psi_{11} & 0 \\ 0 & \Psi_{11} \esmtx$, 
  $M :=\bsmtx M_{11} & 0 \\ 0 & -M_{11} \esmtx$, and
  $M_{11} \succeq 0$.  A typical choice for $\Psi_{11}$ is
  \cite{veenman16}:
  \begin{align}
    \Psi_{11}^v:= \left[1, \frac{1}{(s+p)},\ldots \frac{1}{(s+p)^v}  \right]^T
    \mbox{ with } p>0
  \end{align}  
  The robustness analysis is performed by selecting $(p,v)$ to obtain
  (fixed) $\Psi$ and optimizing over $M_{11} \succeq 0$.  The results
  depend on the choice of $(p,v)$. Larger values of $v$ represent
  a richer class of IQCs and hence yield less conservative results but
  with increasing computational cost.  Further details on this
  parameterization are given in \cite{veenman16}.
\end{ex}


\begin{ex}
  The analysis can incorporate conic combinations of multiple IQCs.
  Let $(\Psi_1,M_1)$ and $(\Psi_2,M_2)$ define valid IQCs for
  $\Delta$. Hence $\int_0^T z_i^T M_i z_i \, dt \ge 0$ where $z_i$ is
  the output $\Psi_i$ driven by $v$ and $w=\Delta(v)$.  For any
  $\lambda_1, \lambda_2 \ge 0$ the two constraints can be combined to
  yield:
  \begin{align}
    \int_0^T \lambda_1 z_1^T M_1 z_1 + \lambda_2 z_2^T M_2 z_2 \, dt \ge 0
  \end{align}
  Thus a valid time-domain IQC for $\Delta$ is given by 
  \begin{align}
    \Psi:=\bsmtx \Psi_1 \\ \Psi_2 \esmtx 
    \mbox{ and }
    M(\lambda):= \bsmtx \lambda_1 M_1 & 0 \\ 0 & \lambda_2 M_2 \esmtx
  \end{align}
  The analysis is performed by selecting $(\Psi_i,M_i)$
  and optimizing over $\lambda$.  
\end{ex}



\subsection{Analysis with the DLMI Condition}

Assume the IQC is $(\Psi,M)$ with $\Psi$ fixed and $M$ constrained to
lie within a feasible set $\mathcal{M}$ described by LMIs.  The DLMI
\eqref{eq:Rob_LMI} has the same form for induced $\mathcal{L}_2$ and
$\mathcal{L}_2$-to-Euclidean gains but with different choices of
$(Q,S,R,F)$. In both cases the DLMI is linear in $(P,M,\gamma^2)$ for
fixed $(G,\Psi)$. The dependence on $\gamma^2$ enters via $R$. The
best (smallest) bound on the robust gain can be computed from a convex
semidefinite program (SDP):
\begin{align*}
  & \min \gamma^2  \\
  & \mbox{subject to: } M\in \mathcal{M}, \, P(T) \succeq F \\
  &  DLMI_{Rob}(P,M,\gamma^2,t) \preceq -\epsilon I 
     \,\, \forall t \in [0,T]
\end{align*}
There are two main issues with solving this SDP.  First, the DLMI
corresponds to an infinite number of constraints since it must hold
for all $t \in [0,T]$. This can be approximated by enforcing the DLMI
on a finite time grid $t_{DLMI}:=\{t_k\}_{k=1}^{N_g} \subset [0,T]$. 


Second, the optimization requires a search over the space of functions
$P:[0,T] \rightarrow \Sm^n$. This issue is addressed by restricting
$P$ to be a linear combination of differentiable basis functions.
Specifically, let $h_j:[0,T] \rightarrow \R$ $(j=1,\ldots,N_s)$ and
$H_k:[0,T] \rightarrow \Sm^n$ $(k=1,\ldots,N_m)$ be given scalar
and matrix differentiable basis functions. The storage function and
its derivative are given by:
\begin{align}
    P(t) & = \sum_{j=1}^{N_s} h_j(t) X_j + \sum_{k=1}^{N_m} H_k(t) x_k \\
    \dot{P}(t) & = \sum_{j=1}^{N_s} \dot{f}_j(t) X_j 
              + \sum_{k=1}^{N_m} \dot{F}_k(t) x_k 
\end{align}
Here $\{ X_j \}_{j=1}^{N_s} \subset \Sm^n$ and
$\{ x_k \}_{k=1}^{N_m} \subset \R$ are optimization variables. Many
choices are possible for the basis functions.  Initial work in
\cite{moore15} used scalar basis functions generated with a cubic
spline and no matrix basis functions.  The spline is constructed by
selecting an interpolation time grid
$\tau_{sp}:=\{\tau_j\}_{j=1}^{N_s}$ where $\tau_j < \tau_{j+1}$.
Note, the spline grid $\tau_{sp}$ is distinct from the DLMI grid
$t_{DLMI}$.  The spline consists of $N_s-1$ cubic functions defined on
the intervals $[\tau_j,\tau_{j+1}]$.  It interpolates the decision
variables $\{ X_j \}_{j=1}^{N_s}$, i.e. $P(\tau_j) = X_j$.  The cubic
functions satisfy boundary conditions to ensure continuity of the
spline and its first/second derivatives at the interval endpoints.
The corresponding spline basis functions $\{h_j \}_{j=1}^{N_s}$ are
not easy to express in explicit form but they can be evaluated
numerically at any $t \in [0,T]$.  Additional details are given
in \cite{moore15}.  The algorithm proposed below also uses a matrix
basis function generated by the RDE condition.




The approximations for the DLMI and $P$ lead to a finite dimensional
SDP in variables $\{ X_j \}_{j=1}^{N_s}$, $\{ x_k \}_{k=1}^{N_m}$,
$M$, and $\gamma^2$.  The optimization can be performed with standard
SDP solvers.  Enforcing the DLMI only on a finite grid decreases the
optimal cost relative to the original infinite-dimensional SDP.
Conversely, restricting $P$ to lie in a finite dimensional subspace
increases the optimal cost.  The solution accuracy depends on the
choice for the constraint time grid and basis functions.  A denser
time grid and additional bases functions will improve the accuracy
but with increased computation time.

\subsection{Analysis with the RDE Condition}

The RDE conditions for (robust) induced $\mathcal{L}_2$ and
$\mathcal{L}_2$-to-Euclidean gains do not require the constraint and
basis function approximations needed for the corresponding DLMI.
Specifically for any $(M,\gamma^2)$ the RDE can be
integrated\footnote{It is still assumed that $(G,\Psi)$ are given and
  fixed.} within a specified numerical accuracy using standard ODE
solvers.  If the RDE exists on $[0,T]$ when integrated backward from
$Y(T)=F$ then the robust gain is less than $\gamma$.  Bisection on
$\gamma$ can be used to find the smallest bound on the robust gain.
The difficulty with the RDE condition is that IQC matrix $M$ enters in
a non-convex fashion.  In most cases it would be computationally
expensive to perform numerical gradient searches over $M$ to find the
smallest bound $\gamma$.

\subsection{Combined Algorithm}

Algorithm~\ref{alg:comb} combines the DLMI and RDE conditions.  The
plant $G$ and IQC filter $\Psi$ are given.  The algorithm is
initialized with a stopping tolerance $tol$, a max number of
iterations $N_{iter}$, a time grid $t_{DLMI}$ to enforce
the DLMI, a time grid $\tau_{sp}$ for the (scalar) spline basis
functions, and a single (zero) matrix basis function $H_1$.

The first step is to solve the finite SDP by enforcing the DLMI on
$t_{DLMI}$.  This returns, if feasible, $\gamma_{SDP}^{(1)}$,
$M^{(1)}$, and the storage function decision variables
$\{ X_j^{(1)} \}_{j=1}^{N_s}$.  The next step is to hold the IQC
matrix fixed at $M^{(1)}$ and bisect to find the smallest $\gamma$
such that the RDE solution exists on $[0,T]$.  This yields
$\gamma_{RDE}^{(1)}$, $P_{RDE}^{(1)}$, and $t_{RDE}^{(1)}$.  Here
$t_{RDE}^{(1)}$ denotes the (dense) grid of time points returned by
the ODE solver associated with $P_{RDE}^{(1)}$.  

Two updates are performed before the next iteration.  First, the
matrix basis function is set equal to the RDE solution if
$\gamma_{RDE}^{(1)}<\infty$.  This choice is optimal for $M^{(1)}$. At
the next iteration the cubic splines are essentially used to perturb
around $P_{RDE}^{(1)}$.  The second update involves the DLMI time
grid.  In particular, if the DLMI time grid is too coarse then
$\gamma_{SDP}^{(1)} < \gamma_{RDE}^{(1)}$. In this case the DLMI is
evaluated on the (dense) grid of time points $t_{RDE}^{(1)}$.  The
time points where the DLMI is infeasible (or some subset) are added to
$t_{DLMI}$.  The algorithm terminates if the RDE and SDP results are
close or the maximum number of iterations has been reached.  Otherwise
the subsequent iterations proceed in the same fashion.


\begin{algorithm}
  \linespread{1.15}\selectfont
  \caption{Combined DLMI/RDE Approach} \label{alg:comb}  
  \begin{algorithmic}[1]
    \State \textbf{Given:} $G$ and $\Psi$
    \State \textbf{Initialize:} $tol$, $N_{iter}$, $t_{DLMI}:=\{t_k\}_{k=1}^{N_g}$,
    $\tau_{sp}:=\{\tau_j\}_{j=1}^{N_s}$, and $H_1\equiv 0$.

    \For{$i=1:N_{iter}$} 

    \State \parbox[t]{0.9\linewidth}{ \textbf{Solve SDP:} Enforce
      DLMI on $t_{DLMI}$. Use spline basis functions defined by
      $\tau_{sp}$ and matrix basis function $H_1$.}
    
    \State  \textbf{Output:}  $\gamma_{SDP}^{(i)}$, $M^{(i)}$, and
    decision vars. for $P$.

    \State

    \State \parbox[t]{0.9\linewidth}{\textbf{Solve RDE:} Hold
      $M^{(i)}$ fixed and bisect to find smallest $\gamma$
      such that the RDE exists on $[0,T]$.
    }
    
    \State \textbf{Output:} $\gamma_{RDE}^{(i)}$, $P_{RDE}^{(i)}$, and
       $t_{RDE}^{(i)}$.

    \State

    \State \textbf{Updates:}
    \State If $\gamma_{RDE}^{(i)}<\infty$ then $H_1:=P_{RDE}^{(i)}$ 
        else $H_1:= 0$.
    \State Add time points to $t_{DLMI}$ 
                if $\gamma_{SDP}^{(i)} < \gamma_{RDE}^{(i)}$.

    \State

    \If{$|\gamma_{SDP}^{(i)} - \gamma_{RDE}^{(i)}| < tol \cdot
      \gamma_{SDP}^{(i)}$}      
    \State Terminate iteration
    \EndIf

    \EndFor
  \end{algorithmic}
\end{algorithm}


\section{Examples}
\label{sec:ex}

\subsection{Robust Induced $\mathcal{L}_2$ Gain}

Consider an uncertain system $F_u(G,\Delta)$ with $\Delta \in \RH$ and
$\|\Delta\|_\infty \le 1$. $G$ is an LTI system defined by:
\begin{align*}
A_G & := \bsmtx -0.8 &  -1.3 &  -2.1 &  -2.5 \\
         2 &  -0.9 &  -8.4 &   0.7 \\
         2 &   8.6 &  -0.5 &  12.5 \\
       2.1 &  -0.3 & -12.6 &  -0.6 \esmtx
&& B_G := \bsmtx -0.6 &  1 \\ 0 & 0.2 \\
       0 & 0.4 \\ -1.3 & -0.2 \esmtx \\
C_G & := \bsmtx  -1.4 &  0  & 0.5 &  0 \\
                  0 &  -0.1 &   1 &   0 \esmtx
&& D_G := \bsmtx -0.3 & 0 \\ 0 & 0 \esmtx
\end{align*}
The infinite-horizon, worst-case induced $\mathcal{L}_2$ gain is 1.49
as computed with the \texttt{wcgain} function in Matlab.
Finite-horizon robust gains are computed with
Algorithm~\ref{alg:comb}. The IQC parameterization in
Example~\ref{ex:LTIuncParam} is used with $v=1$ and $p=10$.
Algorithm~\ref{alg:comb} is initialized for each horizon $T$ with
$tol = 5 \times 10^{-3}$, $N_{iter} = 10$, $t_{DLMI}$ as 20 evenly
spaced points in $[0,T]$, and $\tau_{sp}$ as 10 evenly spaced points
in $[0,T]$.  Figure~\ref{fig:RobL2} shows the finite-horizon robust
gains (blue solid) for $T:=\{1, 2, 5, 10, 20, 30, 40, 50, 100\}$. The
red dashed line denotes the infinite-horizon robust gain of 1.49.  It
took 466 secs to compute all nine finite-horizon results on a standard
laptop.  The iteration for $T=5 sec$ terminated in 3 steps and all
other iterations terminated in 2 steps. Matlab's \texttt{LMILab} and
\texttt{ode45} were used to solve the SDP and integrate the RDE in
Algorithm~\ref{alg:comb}. The ODE options were set to have an absolute
and relative error of $10^{-8}$ and $10^{-5}$, respectively.


\begin{figure}[h!] 
  \centering
  \includegraphics[scale=0.55]{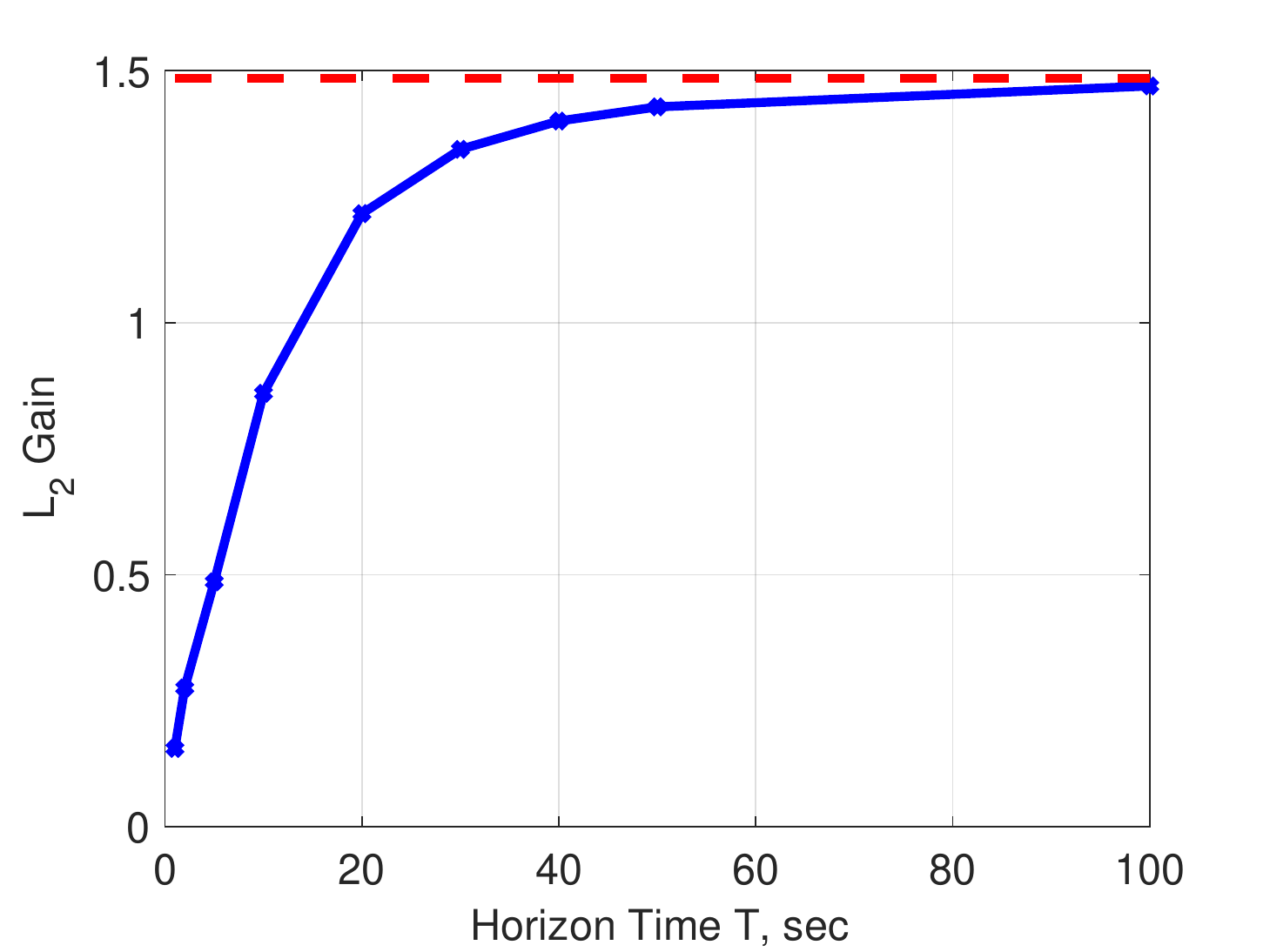}
  \caption{Robust Induced $\mathcal{L}_2$ Gain vs. Time Horizon (blue)
     and infinite horizon result (red dashed)}
  \label{fig:RobL2}
\end{figure}

\subsection{Two-link robot arm}

This example considers the robustness of a two link robot arm, shown
in Figure~\ref{fig:twoLinkRobot}, as it traverses a given finite-time
trajectory.  The mass and moment of inertia of the $i$-th link are
denoted by $m_i$ and $I_i$.  The robot properties are $m_1=3kg$,
$m_2 = 2kg$, $l_1 = l_2= 0.3m$, $r_1=r_2 = 0.15m$,
$I_1= 0.09 kg\cdot m^2$, and $I_2= 0.06 kg\cdot m^2$.  The equations
of motion \cite{murray94} for the two-link robot arm are given by:
\begin{align} 
\label{eq:linkDyns}
\begin{split}
&  \bmat{\alpha+ 2\beta\cos(\theta_2) & \delta + \beta \cos(\theta_2) \\
    \delta +  \beta \cos(\theta_2) & \delta}
  \bmat{\ddot{\theta}_1 \\ \ddot{\theta}_2} + \\
&  \bmat{-\beta \sin(\theta_2) \dot{\theta}_2 &
    -\beta \sin(\theta_2) (\dot{\theta}_1 + \dot{\theta}_2) \\
    \beta \sin(\theta_2) \dot{\theta}_1 & 0} \bmat{\dot{\theta}_1 \\
    \dot{\theta}_2} =\bmat{\tau_1 \\ \tau_2}
\end{split}
\end{align}
where $\tau_i$ is the torque applied to the base of the $i$-th link
and the model parameters are:
\begin{align*}
\alpha &:= I_{1} + I_{2} + m_1r_1^2 + m_2(l_1^2 + r_2^2) = 0.4425 \, kg \cdot m^2\\
\beta &:= m_2 l_1 r_2 = 0.09 \, kg \cdot m^2 \\
\delta &:= I_{2} + m_2 r_2^2 = 0.105 \, kg \cdot m^2
\end{align*}

\begin{figure}[h!] 
  \centering
  \includegraphics[scale=0.4]{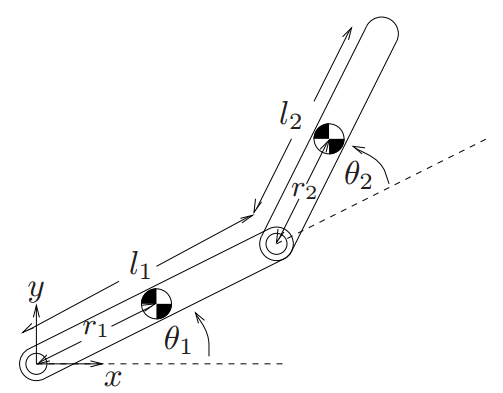}
  \caption{Two link robot arm \cite{murray94}.}
  \label{fig:twoLinkRobot}
\end{figure}

The state and input are denoted by
$\eta = \bmat{\theta_1 & \dot{\theta}_1 &\theta_2 & \dot{\theta_2}}^T$
and $\tau = \bmat{\tau_1 & \tau_2}^T$.  A trajectory $\bar{\eta}$ was
selected for the arm and the required input torque $\bar{\tau}$ was
computed.  Figure~\ref{fig:LinkRobotFigure2} shows the desired
trajectory for the tip of arm two (red dashed line) in Cartesian
coordinates from $t=0$ to $T=5$ sec.  The robot arm positions at four
different times are also shown.

\begin{figure}[h!] 
  \centering
  \includegraphics[scale=0.5]{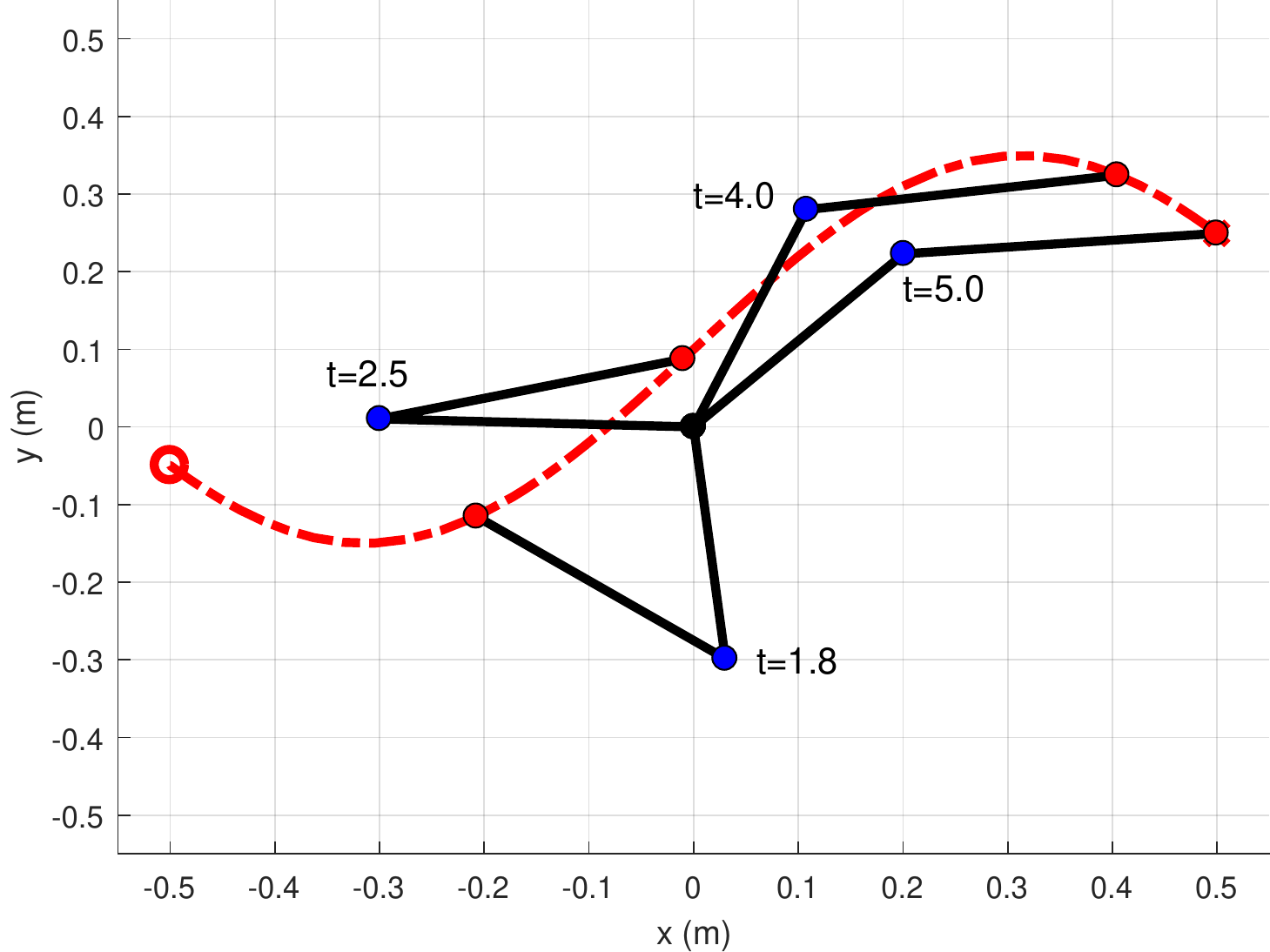}
  \caption{Desired trajectory in Cartesian coordinates
    (dotted red line) and robot arm position at four times.}
  \label{fig:LinkRobotFigure2}
\end{figure}

The objective is for the robot to track this trajectory in
the presence of small torque disturbances $d$. The input torque vector
is $\tau = \bar{\tau} + u + d$ where $u$ is an additional control
torque (specified below) to reject the disturbances. The
nonlinear dynamics \eqref{eq:linkDyns} are linearized around the
trajectory $(\bar{\eta}, \bar{\tau})$ to obtain an LTV
system $P$:
\begin{align}
  \dot{x}(t) = A(t) x(t) + B(t) \left( u(t) + d(t) \right)
\end{align}
where $x(t):=\eta(t)-\bar{\eta}(t)$ is the deviation of the nonlinear
state from the equilibrium trajectory.  The state matrices $(A,B)$
were computed at 200 uniformly spaced points in $[0,5]$. These state
matrices are linearly interpolated to obtain the LTV system at any
$t \in [0,T]$.

Next, a time-varying state feedback law $u(t)=-K(t)x(t)$ is designed
to improve the disturbance rejection.  The feedback gain is
constructed via finite horizon, LQR design with the following cost
function:
\begin{equation} \nonumber
J(x, u) = x(T)^TFx(T) 
   + \int_0^T \bsmtx x(t) \\ u(t) \esmtx^T
     \bsmtx Q & S  \\ S^T & R \esmtx
     \bsmtx x(t) \\ u(t) \esmtx \ dt
\end{equation}
where $Q:=\text{diag}(100,10,100,10)$, $R:= \text{diag}(0.1,0.1)$,
$S=0$ and $F:=\text{diag}(1,0.1,1,0.1)$. The
optimal feedback gain is $K(t)= R^{-1}B(t)^TP(t)$ where
$P:[0,T] \rightarrow \mathbb{S}^n$ is the solution of the RDE
corresponding to $(Q,S,R,F)$ with terminal constraint $P(T)=F$. 


The analysis aims to bound the final position of the robot arm in the
presence of the disturbances $d$ and uncertainty at the joint
connecting the two arms.  Figure~\ref{fig:UncRobot} shows a block
diagram for the uncertain, linearized robot arm dynamics.
$\Delta \in \RH$ is an LTI uncertainty with $\|\Delta\|_\infty \le 1$.
The factors of $\sqrt{0.8}$ are included so that the overall level of
uncertainty at the joint is $0.8$. The error signal $e$
contains the two linearized joint angles:
\begin{align}
    e(t) & = \bmtx 1 & 0 & 0 & 0 \\ 0 & 0 & 1 & 0 \emtx x(t) 
         := C x(t)
\end{align} 

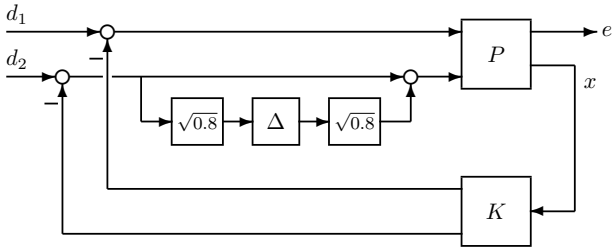
\begin{figure}[h]
\centering
\scalebox{0.85}{
  \begin{picture}(270,100)(0,-95)
    \thicklines
    \put(0,0){\vector(1,0){42}}
    \put(0,5){$d_1$}
    \put(0,-20){\vector(1,0){22}}
    \put(0,-15){$d_2$}
    \put(48,0){\vector(1,0){155}}
    \put(28,-20){\line(1,0){15}}
    \put(47,-20){\vector(1,0){130}}
    \put(60,-20){\line(0,-1){20}}
    \put(60,-40){\vector(1,0){14}}
    \put(74,-50){\framebox(22,20){\footnotesize $\sqrt{0.8}$}}
    \put(96,-40){\vector(1,0){14}}
    \put(110,-50){\framebox(20,20){$\Delta$}}
    \put(130,-40){\vector(1,0){14}}
    \put(144,-50){\framebox(22,20){\footnotesize $\sqrt{0.8}$}}
    \put(166,-40){\line(1,0){14}}
    \put(180,-40){\vector(0,1){17}}
    \put(180,-20){\circle{6}}
    \put(183,-20){\vector(1,0){20}}
    \put(203,-25){\framebox(30,30){$P$}}
    \put(233,0){\vector(1,0){30}}
    \put(265,-2){$e$}
    \put(257,-25){$x$}
    \put(233,-15){\line(1,0){20}}
    \put(253,-15){\line(0,-1){65}}
    \put(253,-80){\vector(-1,0){20}}
    \put(203,-95){\framebox(30,30){$K$}}
    \put(203,-70){\line(-1,0){158}}
    \put(45,-70){\vector(0,1){67}}
    \put(45,0){\circle{6}}
    \put(37,-12){\line(1,0){6}}
    \put(203,-90){\line(-1,0){178}}
    \put(25,-90){\vector(0,1){67}}
    \put(25,-20){\circle{6}}
    \put(17,-32){\line(1,0){6}}
\end{picture}
} 
\caption{Uncertain LTV Model for Two-Arm Robot}
\label{fig:UncRobot}
\end{figure}

Algorithm~\ref{alg:comb} was used to compute bounds on the robust
$\mathcal{L}_2$-to-Euclidean gain from $d$ to $e$ over the $T=5$sec
trajectory. The IQC is parameterized as in
Example~\ref{ex:LTIuncParam} with $v=1$ and $p=10$.
Algorithm~\ref{alg:comb} is initialized with $tol = 5 \times 10^{-3}$,
$N_{iter} = 10$, $t_{DLMI}$ as 20 evenly spaced points in $[0,T]$, and
$\tau_{sp}$ as 10 evenly spaced points in $[0,T]$. The algorithm
terminated after 3 iterations with a robust gain of
$\gamma_{CL}=0.0592$.  It took 103sec to perform this computation. For
comparison, the open-loop robust gain (with $K=0$) is
$\gamma_{OL}=941.6$. This computation terminated in 7 iterations and
took 321 sec. As expected, the feedback significantly reduces the
gain.


The results were tested by randomly generating 100 instances of
$\Delta$ with 0 to 6 states. Each instance of $\Delta$ was substituted
into Figure~\ref{fig:UncRobot} to generate a (nominal) LTV
closed-loop. The linearized closed-loop for each $\Delta$ was
simulated with disturbances $d$ such that
$\|d\|_{2,[0,T]} \leq \beta = 5$.  Figure~\ref{fig:etaGusims} shows
the linearized simulation results superimposed on the trim trajectory
$\bar \eta$. The final outputs $e(T)$ are designated by the white
dots. The light blue circle corresponds to
$\|e(T)\|_2^2 = \theta_1(T)^2 + \theta_2(T)^2 \leq \gamma_{CL}^2
\beta^2$.  As expected the simulated trajectories terminate in the
computed bound (cyan circle).  

Next, the closed-loop gain was evaluated for each $\Delta$ via
bisection with the (nominal) RDE. The largest gain was $0.0577$
achieved with the following uncertainty:
\begin{align*} 
  \Delta_{wc}(s) = 
    \frac{-0.7861 s^2 - 3.383 s - 3.631}{0.8 s^2 + 3.414 s + 3.631}.
\end{align*}
A worst-case disturbance was constructed for the closed-loop with
$\Delta_{wc}$. This construction is based on the two-point boundary
value problem that connects the performance to the RDE condition
(Lemma~\ref{lem:TPBVP} in Appendix~\ref{sec:BRLproof}). This yields a
trajectory with terminal condition very near to the boundary of the
cyan disk (see zoomed inset) indicating that the computed robustness
bounds are not overly conservative. Figure~\ref{fig:xyrobustGsims}
shows the same trajectories and bound but transformed to the Cartesian
space of the robot arm.


\begin{figure}[h!] 
  \centering
  \includegraphics[scale=0.27]{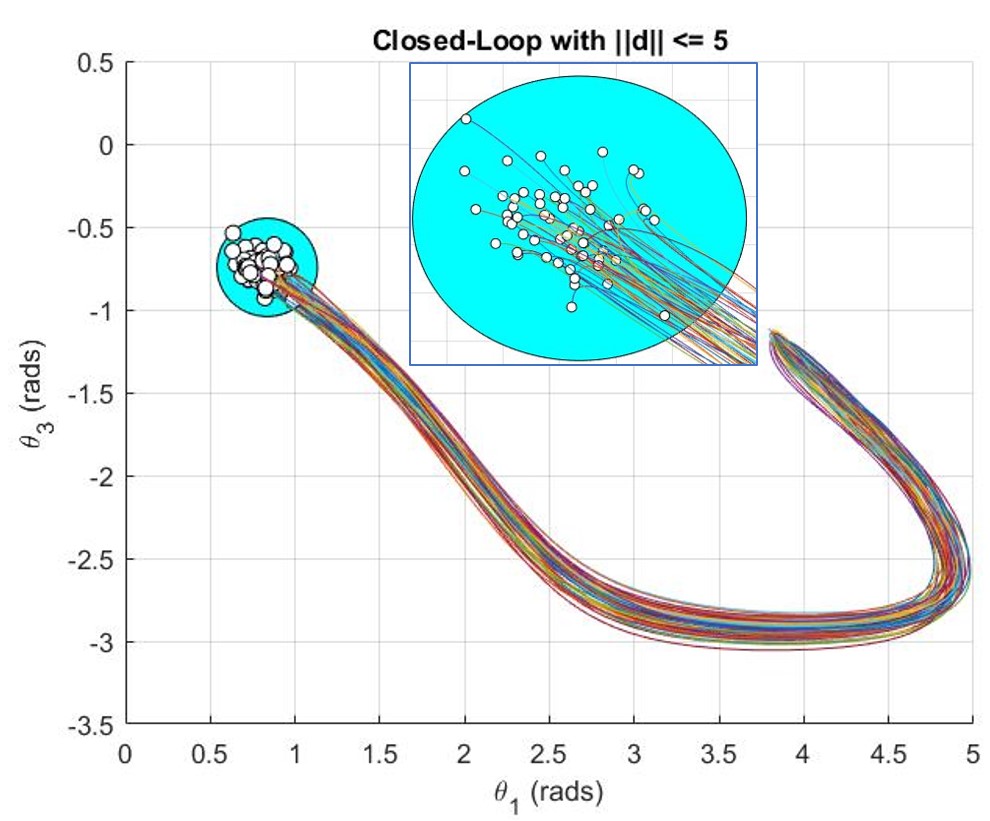}
  \caption{Closed-loop trajectories in the $(\theta_1, \theta_2)$
    space with $\Delta_{wc}$ and random disturbances
    $\|d\|_{2,[0,T]} \leq 5$. The robust bound on $e(T)$ is also shown
    (cyan circle).}
  \label{fig:etaGusims}
\end{figure}

\begin{figure}[h!] 
  \centering
  \includegraphics[scale=0.28]{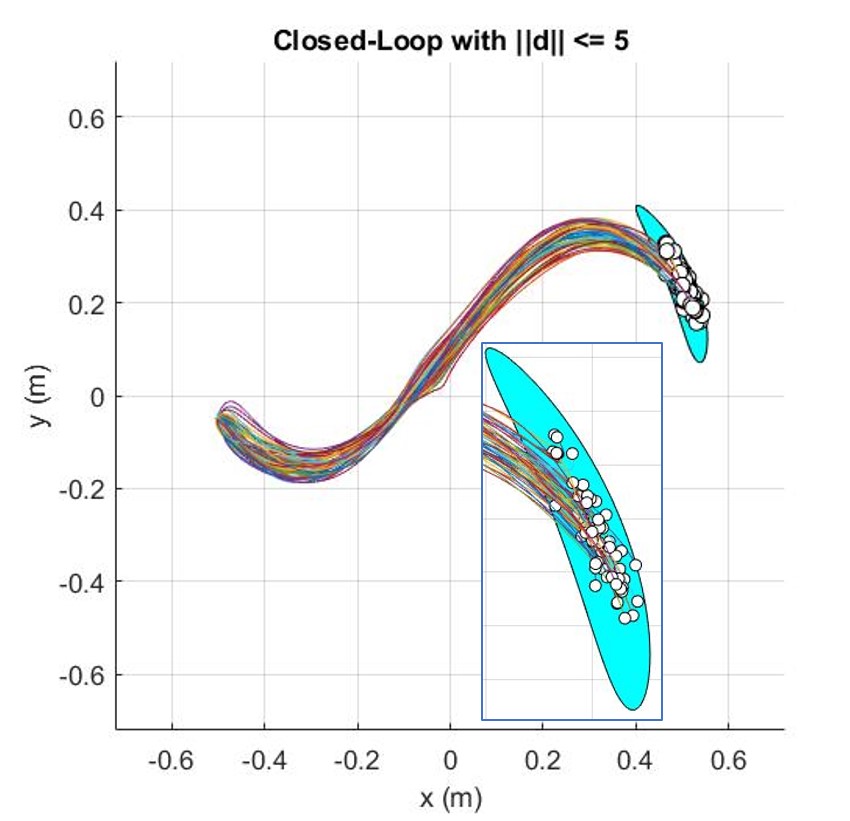}
  \caption{Closed-loop trajectories in Cartesian coordinates with
    $\Delta_{wc}$ and random disturbances $\|d\|_{2,[0,T]} \leq
    5$. The robust bound on $e(T)$ is also shown (cyan circle).}
  \label{fig:xyrobustGsims}
\end{figure}

\section{Conclusions}

This paper presented robust performance measures for the analysis of
uncertain LTV systems over a finite-horizon. The proposed numerical
algorithm combines differential linear matrix inequalities and Riccati
differential equations.  The utility of robust gains was demonstrated
with examples including a two-link robot arm. Future work will include
refinements to the algorithm along with methods to construct
worst-case perturbations.

\section*{Acknowledgments}
\label{sec:ack}

The authors gratefully acknowledge support from the National Science
Foundation under grants ECCS-1405413 and NSF-CMMI-1254129. A.
Packard acknowledges the generous support from the FANUC Corporation.

\begin{small}
\bibliographystyle{abbrv}
\bibliography{ltvbib}
\end{small}

\appendix
\section{Proof of Theorem~\ref{thm:BRL}}
\label{sec:BRLproof}

Theorem~\ref{thm:BRL} states an equivalence between: 1) a bound on the
quadratic cost $J$, 2) the existence of a solution $Y$ to a RDE, and
3) the existence of a solution $P$ to a RDI.  The proof of
$(3\Rightarrow 1)$ is given in the main text and the rest
of the proof is in this appendix. Section~\ref{sec:TPBVP} discusses a
related two-point boundary value problem (TPBVP).  The remainder of
the appendix provides proofs for $(1\Rightarrow 2)$,
$(2\Rightarrow 1)$, and $(1\Rightarrow 3)$.  This demonstrates
$(1\Leftrightarrow 2)$ and $(1\Leftrightarrow 3)$.  The equivalence
$(2\Leftrightarrow 3)$ follows from these results.  The TPBVP lemma
and proof of $(1\Rightarrow 2)$ is similar to the presentation given
in Section 3.7.4 of \cite{green95} for the special case of
finite-horizon induced $\mathcal{L}_2$ gains.

\subsection{Two-Point Boundary Value Problem}
\label{sec:TPBVP}

The LTV dynamics (Equation~\ref{eq:LTV1}) and quadratic cost $J$ are
defined by $(A,B)$ and $(Q,S,R,F)$, respectively.                 
Define a time-varying Hamiltonian $H:[0,T] \rightarrow
\Sm^{2n_x}$ as:
\begin{align}
  \label{eq:Ham}
  H:= \bmtx A & 0 \\ -Q & -A^T \emtx + \bmtx -B \\ S \emtx
  R^{-1} \bmtx S^T & B^T \emtx
\end{align}
A two-point boundary value problem (TPBVP) is defined for $t_0\in
[0,T]$ as:
\begin{align}
  \label{eq:TPBVP}
  \bmtx \dot{x}^*(t) \\ \dot{\lambda}(t) \emtx & = 
  H(t) \bmtx x^*(t) \\ \lambda(t) \emtx \\
  \label{eq:TPBVP_BC}
  \bmtx x^*(t_0) \\ \lambda(T) \emtx & = \bmtx 0 \\ F x^*(T) \emtx
\end{align}
Note that $x^*\equiv 0$ and $\lambda \equiv 0$ is a trivial solution
for this TPBVP.  The times $(t_0,T)$ are \emph{conjugate points} if
this TPBVP has a non-trivial solution.

\begin{lem}
  \label{lem:TPBVP}
  Let $t_0 \in [0,T]$ be given and assume $\exists \epsilon >0$ such
  that $J(d) \le -\epsilon \|d\|_{2,[0,T]}^2$
  $\forall d \in L_2[0,T]$. Then $(t_0,T)$ are not conjugate points
  of the TPBVP.
\end{lem}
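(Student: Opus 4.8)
The plan is to prove the contrapositive: I will assume $(t_0,T)$ \emph{are} conjugate points and produce a nonzero admissible disturbance on which $J$ fails to be strictly negative, contradicting statement~(1) of Theorem~\ref{thm:BRL}. So let $(x^*,\lambda)$ be a nontrivial solution of the TPBVP \eqref{eq:TPBVP}--\eqref{eq:TPBVP_BC} on $[t_0,T]$. The Hamiltonian $H$ in \eqref{eq:Ham} is exactly the coefficient matrix obtained from the first-order stationarity (Pontryagin) conditions for $J$ after eliminating the stationary control, so the natural candidate disturbance is
\begin{align*}
  d^*(t) := -R(t)^{-1}\left( S(t)^T x^*(t) + B(t)^T \lambda(t) \right),
  \quad t\in[t_0,T],
\end{align*}
extended by $d^*\equiv 0$ on $[0,t_0)$. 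Taking $x\equiv 0$ on $[0,t_0]$ and $x=x^*$ on $[t_0,T]$, a one-line check against the top block-row of $H$ confirms $\dot x = A x + B d^*$ with $x(0)=0$, so $d^*$ is admissible for $J$ and the integrand of $J$ vanishes on $[0,t_0]$.

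The heart of the argument is a single integration-by-parts identity. Differentiating $(x^*)^T\lambda$ along the two block-rows of $H$ and substituting the definition of $d^*$, I expect the $A$-cross terms to cancel and the remainder to collapse to the negative running cost,
\begin{align*}
  \frac{d}{dt}\!\left((x^*)^T \lambda\right)
  = -\,\bsmtx x^* \\ d^* \esmtx^T \bsmtx Q & S \\ S^T & R \esmtx \bsmtx x^* \\ d^* \esmtx .
\end{align*}
Integrating from $t_0$ to $T$ and applying the boundary conditions $x^*(t_0)=0$ and $\lambda(T)=F\,x^*(T)$ then reduces the running-cost integral to exactly $-\,x^*(T)^T F\,x^*(T)$, which cancels the terminal penalty $x^*(T)^T F\,x^*(T)$ in $J$. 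Since the integrand is zero on $[0,t_0]$, this yields $J(d^*)=0$.

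It remains to certify $d^*\not\equiv 0$, since then $J(d^*)=0$ contradicts $J(d^*)\le-\epsilon\|d^*\|_{2,[0,T]}^2<0$. I establish this by backward uniqueness: if $d^*\equiv 0$, the state equation with $x^*(t_0)=0$ forces $x^*\equiv 0$, and substituting $x^*\equiv 0$ into the costate row of $H$ leaves a homogeneous linear equation for $\lambda$ whose terminal value $\lambda(T)=F\,x^*(T)=0$ forces $\lambda\equiv 0$; thus the TPBVP solution would be trivial, contradicting the conjugate-point hypothesis. Piecewise continuity of the data makes $x^*,\lambda$ absolutely continuous and $d^*$ piecewise continuous, so $d^*\not\equiv 0$ indeed gives $\|d^*\|_{2,[0,T]}>0$.

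The step I expect to demand the most care is the quadratic bookkeeping in the total-derivative identity: the exact cancellation of the $A$-terms and the clean collapse to $-x^*(T)^T F\,x^*(T)$ hinge on the precise block structure of $H$ and on the costate here being normalized as half the usual adjoint, which is also why the transversality condition reads $\lambda(T)=F\,x^*(T)$ rather than with a factor of two. Invertibility of $R$, guaranteed by the standing hypothesis $R(t)\prec 0$, is what makes $d^*$ and the elimination well defined throughout $[t_0,T]$.
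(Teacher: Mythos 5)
Your proposal is correct and takes essentially the same route as the paper's own proof: the same candidate disturbance $\bar{d}=-R^{-1}(S^T x^* + B^T\lambda)$ extended by zero on $[0,t_0)$, the same total-derivative identity collapsing the running cost to $-\frac{d}{dt}\left(x^{*T}\lambda\right)$ so that the boundary conditions give $J(\bar{d})=0$, and the same cascaded linear-ODE uniqueness argument showing that $\bar{d}\equiv 0$ forces $x^*\equiv 0$ and then $\lambda\equiv 0$. The only difference is presentational (you argue by contrapositive from a nontrivial solution, while the paper takes an arbitrary solution and concludes it is trivial), and your verification of the sign bookkeeping and the role of $R(t)\prec 0$ matches the paper's computation.
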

\begin{proof}
  For $t_0=T$ the boundary conditions immediately imply $x^*(T)=\lambda(T)=0$, 
  i.e. $(t_0,T)$ are not conjugate points.  Thus assume
  $t_0 \in [0,T)$ and let $(x^*,\lambda)$ be any solution to the
  TPBVP. Define the signal:
  \begin{align*}
    \bar{d}(t) := \left\{ \begin{array}{cc}
            0 & t\le t_0 \\
           -R^{-1}(t) (S(t)^T x^*(t) + B(t)^T \lambda(t)) & t>t_0
                          \end{array}    \right.
  \end{align*}
  The TPBVP dynamics can be re-written in terms of $\bar{d}$:
  \begin{align}
    \label{eq:TPBVPdbar}
    \bmtx \dot{x}^*(t) \\ \dot{\lambda}(t) \emtx = 
       \bmtx A & 0 \\ -Q & -A^T \emtx 
        \bmtx x^*(t) \\ \lambda(t) \emtx
     - \bmtx -B \\ S \emtx \bar{d}(t)
  \end{align}
  Moreover, the response of the LTV system (Equation~\ref{eq:LTV1})
  with input $\bar{d}$ and initial condition $x(0)=0$ is given by
  $x(t)=0$ for $t< t_0$ and $x(t)=x^*(t)$ for $t \ge t_0$.

  The cost associated with the input $\bar{d}$ is:
  \begin{align*}
    J(\bar{d}) & = x^*(T)^T F x^*(T) \\
    & + \int_{t_0}^T \bsmtx x^*(t) \\ \bar d(t) \esmtx^T
    \bsmtx Q(t) & S(t) \\ S(t)^T & R(t) \esmtx
     \bsmtx x^*(t) \\ \bar d(t) \esmtx \, dt 
  \end{align*}
  The integrand can be simplified using the definition of $\bar{d}$
  and the TPBVP dynamics in Equation~\ref{eq:TPBVPdbar}:
  \begin{align*}
    \bsmtx x^* \\ \bar d \esmtx^T \bsmtx Q & S \\ S^T & R \esmtx
    \bsmtx x^* \\ \bar d \esmtx
    & = x^{*^T} \left( Q x^* + S \bar{d} \right) 
       + \bar{d}^T \left( S^T x^* + R \bar{d} \right) \\
    & = -x^{*^T} \left( \dot{\lambda} + A^T \lambda \right) 
      - \left( \dot{x}^* -Ax^* \right)^T \lambda \\
    & = -\frac{d}{dt} \left( x^{*^T} \lambda \right)
  \end{align*}
  These simplifications allow the cost to be rewritten as:
  \begin{align*}
    J(\bar{d}) = x^*(T)^T F x^*(T)
      - \int_{t_0}^T \frac{d}{dt} \left( x^{*^T}(t) \lambda(t) \right) \, dt      
  \end{align*}
  Integrate the last term and apply the boundary conditions
  $x^*(t_0)=0$ and $\lambda(T)=Fx^*(T)$ to show $J(\bar{d})=0$.  It is
  assumed that $J(\bar{d}) \le -\epsilon \| \bar{d}\|_{2,[0,T]}^2$ and
  hence $\bar{d}=0$.  Thus the TPBVP dynamics simplify to:
  \begin{align*}
    \bmtx \dot{x}^*(t) \\ \dot{\lambda}(t) \emtx = 
       \bmtx A & 0 \\ -Q & -A^T \emtx 
        \bmtx x^*(t) \\ \lambda(t) \emtx
  \end{align*}
  The boundary condition $x^*(t_0)=0$ thus implies $x^*\equiv 0$.
  This further implies that $\dot{\lambda} = -A^T \lambda$ with
  $\lambda(T)=Fx^*(T)=0$. Hence $\lambda \equiv 0$.  Therefore the
  TPBVP solution is trivial and $(t_0,T)$ are not conjugate
  points.
\end{proof}

\subsection{Proof of $\mathbf{(1\Rightarrow 2)}$ }


Assume $J(d) \le -\epsilon \|d\|_{2,[0,T]}^2$ $\forall d \in
\Ltwo$. Let $\Phi(t,T)$ denote the transition matrix associated with
the Hamiltonian dynamics (Equation~\ref{eq:TPBVP}) so that
for any $t \in [0,T]$:
\begin{align}
  \label{eq:HamPhi}
  \bmtx x^*(t) \\ \lambda(t) \emtx 
  = \Phi(t,T) \bmtx x^*(T) \\ \lambda(T) \emtx 
\end{align}
Note that a solution of the TPBVP must also satisfy the boundary
conditions in Equation~\ref{eq:TPBVP_BC}. Next define the following
matrix function:
\begin{align}
  \label{eq:X1X2}
  \bmtx X_1(t,T) \\ X_2(t,T) \emtx := \Phi(t,T) \bmtx I \\ F  \emtx 
\end{align}
Both $X_1$ and $X_2$ have $n_x$ rows compatible with
$\bsmtx x^*(t) \\ \lambda(t) \esmtx$.

It can be shown that $X_1(t,T)$ is nonsingular for all $t \in [0,T]$.
In particular, assume there exists a vector $v$ and time $t_0\in[0,T]$
such that $X_1(t_0,T)v=0$.  Set $x^*(T)=v$ and $\lambda(T)=Fv$.  The
state transition matrix (Equation~\ref{eq:HamPhi}) gives a solution
$(x^*,\lambda)$ for the Hamiltonian dynamics on $[0,T]$.  From the
definition $X_1$ it follows that $x^*(t_0) = X_1(t_0,T)v = 0$. Hence
$(x^*, \lambda)$ satisfy the TPBVP boundary conditions at $(t_0,T)$.
By Lemma~\ref{lem:TPBVP}, $J(d) \le -\epsilon \|d\|_{2,[0,T]}^2$
implies that $(t_0,T)$ are not conjugate points, i.e. the solution to
the TPBVP is trivial. Thus, $v=x^*(T)=0$ so that $X_1(t_0,T)$ is
nonsingular.

Finally, it can be verified that $Y(t):=X_2(t,T) X_1(t,T)^{-1}$
satisfies the RDE and $Y(T)=F$. It follows from $\Phi(T,T)=I$ and
Equation~\ref{eq:X1X2} that $X_1(T,T)=I$ and $X_2(T,T)=F$. Hence
$Y(T)=F$. Next, differentiating $Y(t)$ with respect to time $t$
yields:
\begin{align}
  \label{eq:Ydot1}
  \begin{split}
    \dot{Y} & = \dot{X}_2 X_1^{-1} - X_2 X_1^{-1} \dot{X}_1 X_1^{-1} \\
    & = \bmtx -Y & I \emtx \bmtx \dot{X}_1 \\ \dot{X}_2 \emtx X_1^{-1}
  \end{split}
\end{align}
By the definition of $X_1$ and $X_2$ in Equation~\ref{eq:X1X2},
\begin{align}
  \label{eq:Xdot}
    \bmtx \dot{X}_1 \\ \dot{X}_2 \emtx & = \dot{\Phi} \bmtx I \\ F \emtx 
    = H \Phi \bmtx I \\ F \emtx 
    = H \bmtx X_1 \\ X_2 \emtx
\end{align}
The second equality follows because $\Phi$ is the state transition
matrix for $H$. The third equality follows from
the definition of $X_1$ and $X_2$. Combine
Equations~\ref{eq:Ydot1} and \ref{eq:Xdot}:
\begin{align}
  \dot{Y} = \bmtx -Y & I \emtx H \bmtx I \\ Y \emtx
\end{align}
Substitute for $H$ (Eq.~\ref{eq:Ham}) to verify $Y$
solves the RDE.

\subsection{Proof of $\mathbf{(2\Rightarrow 1)}$ }

Assume the RDE has a solution $Y$.  The boundary conditions $Y(T)=F$
and $x(0)=0$ imply that $J$ can be equivalently written as:
\begin{align*}
  J(d) & = \int_0^T \bsmtx x(t) \\ d(t) \esmtx^T 
  \bsmtx Q(t) & S(t) \\ S(t)^T & R(t) \esmtx \bsmtx x(t) \\ d(t) \esmtx  
  \, dt \\ 
  & + \int_0^T \frac{d}{dt}\left( x(t)^T Y(t) x(t) \right) \, dt 
\end{align*}
The integrand in the second term can be expanded as $\dot{x}^T Y x +
x^T\dot{Y} x + x^T Y \dot{x}$. Substitute for $\dot{x}$ using the
system dynamics (Equation~\ref{eq:LTV1}) and for $\dot{Y}$ using the
RDE. After some algebra, this yields the following simplified form:
\begin{align}
  J(d) = \int_0^T \left( d(t) - \bar{d}(t) \right)^T R(t)
  \left( d(t) - \bar{d}(t) \right) \, dt
\end{align}
where $\bar{d}:=-R^{-1} (YB+S)^T x$.  Thus the cost function can be
bounded as follows:
\begin{align}
  \label{eq:Jbnd1}
  J(d) \le \alpha \| d-\bar{d}\|_{2,[0,T]}^2
\end{align}
where $\alpha := \max_{t\in[0,T]} \lambda_{max}(R(t))<0$.  

Finally, define the LTV system $W$ with input $d$ and output
$d-\bar{d}$:
\begin{align}
  W:= \left[ \begin{array}{c|c} A & B \\ \hline R^{-1} (YB+S)^T & I_{n_d} 
    \end{array} \right]
\end{align}
This system is invertible since the feedthrough matrix $I_{n_d}$ is
nonsingular.  Hence $W^{-1}$ exists and has finite gain, i.e.
$\|d\|_{2,[0,T]} \le \beta \| d-\bar{d}\|_{2,[0,T]}$ for some $\beta
<\infty$. This further yields $\alpha \| d-\bar{d}\|_{2,[0,T]}^2 \le
-\epsilon \| d\|_{2,[0,T]}^2$ where $\epsilon:=
-\frac{\alpha}{\beta^2}>0$.  Combine this bound with
Equation~\ref{eq:Jbnd1} to conclude that $J(d) \le -\epsilon
\|d\|_{2,[0,T]}^2$ $\forall d \in \Ltwo$.

\subsection{Proof of $\mathbf{(1\Rightarrow 3)}$ }

Assume $J(d) \le -\epsilon \|d\|_{2,[0,T]}^2$ $\forall d \in \Ltwo$
with $J$ defined by $(Q,S,R,F)$.  As noted previously, the LTV
system \eqref{eq:LTV1} has finite gain from $d$ to $x$,
i.e. $\|x\|_{2,[0,T]} \le \beta \|d\|_{2,[0,T]}$ for some
$\beta<\infty$.  Hence, there exists $\tilde\epsilon>0$ such that
\begin{align}
  \label{eq:Jpert}
  J(d) \le -\tilde\epsilon \left( \|x\|_{2,[0,T]}^2 + \|d\|_{2,[0,T]}^2
  \right)  \,\, \forall d \in L_2[0,T]
\end{align}
Define the perturbed cost function $\tilde{J}(d)$ with
$(\tilde{Q},S,R,F)$ where $\tilde{Q}:=Q+\tilde{\epsilon} I_{n_x}$.
The bound in Equation~\ref{eq:Jpert} is equivalent to $\tilde{J}(d)
\le -\tilde \epsilon \|d\|_{2,[0,T]}^2$ $\forall d \in \Ltwo$.  By $(1
\Rightarrow 2)$, there exists $\tilde{Y}:[0,T] \rightarrow \Sm^n$ such
that $\tilde{Y}(T)=F$ and
\begin{align*}
  \dot{ \tilde{Y} } + A^T \tilde{Y} + \tilde{Y} A + \tilde{Q} - (
  \tilde{Y} B+S) R^{-1} ( \tilde{Y}B+S)^T = 0
\end{align*}
Substitute $\tilde{Q}:=Q+ \tilde{\epsilon} I_{n_x}$ to obtain:
\begin{align*}
  \dot{ \tilde{Y} } + A^T \tilde{Y} + \tilde{Y} A + Q - ( \tilde{Y}
  B+S) R^{-1} ( \tilde{Y}B+S)^T = -\tilde{\epsilon} I_{n_x}
\end{align*}
Thus $\tilde{Y}$ satisfies the boundary condition
$\tilde{Y}(T) \succeq F$ and satisfies the strict RDI defined with
$(Q,S,R,F)$.

\end{document}